\newtheorem{lemma}{Lemma}
\newtheorem{theorem}{Theorem}
\newtheorem{corollary}{Corollary}
\newtheorem{conjecture}{Conjecture}
\theoremstyle{definition}
\newtheorem{definition}{Definition}
\newtheorem{question}{Question}
\DeclareMathOperator{\tr}{tr}
\DeclareMathOperator*{\e}{\mathbb E}
\begin{document}

\title{Long-time properties of generic Floquet systems are approximately periodic with the driving period}

\author{Yichen Huang (黄溢辰)\orcidlink{0000-0002-8496-9251}\thanks{yichenhuang@fas.harvard.edu}}
\affil{Department of Physics, Harvard University, Cambridge, Massachusetts 02138, USA}

\begin{CJK}{UTF8}{gbsn}

\maketitle

\end{CJK}

\begin{abstract}

A Floquet quantum system is governed by a Hamiltonian that is periodic in time. Consider the space of piecewise time-independent Floquet systems with (geometrically) local interactions. We prove that for all but a measure zero set of systems in this space, starting from a random product state, many properties (including expectation values of observables and the entanglement entropy of a macroscopically large subsystem) at long times are approximately periodic with the same period as the Hamiltonian. Thus, in almost every Floquet system of arbitrarily large but finite size, discrete time-crystalline behavior does not persist to strictly infinite time.

\end{abstract}

\section{Introduction}

Floquet quantum systems are periodically driven systems whose Hamiltonian is a periodic function of time \cite{MS17, HRRS20, HD22}. They are of high current interest \cite{RSS12, LDM14PRL, LDM14PRE, DR14, PCPA15, PPHA15} partly because they are believed to host certain dynamical phases of matter \cite{KLMS16, vKS16, EBN16, YPPV17, ZHK+17, CCL+17, SZ17, EMNY20, MIQ+22, ZLM+23, KMS19} that are forbidden in systems governed by a time-independent Hamiltonian.

Let $H(t)$ be a time-dependent Hamiltonian such that
\begin{equation}
H(t+1)=H(t),\quad\forall t\in\mathbb R,
\end{equation}
where we assume without loss of generality that the driving period is $1$. Let $\mathcal T$ be the time-ordering operator and
\begin{equation} \label{eq:teo}
U(t_0,t_1):=\mathcal Te^{-i\int_{t_0}^{t_1}H(t)\,\mathrm dt},\quad t_0\le t_1
\end{equation}
be the unitary time-evolution operator from time $t_0$ to $t_1$. Define the Floquet operator
\begin{equation} \label{eq:flqo}
U_F:=U(0,1)
\end{equation}
as the time-evolution operator in one period. At time $t=0$, we initialize the system in a state $|\psi(0)\rangle$. The state at time $t\ge0$ is
\begin{equation}
|\psi(t)\rangle=U(0,t)|\psi(0)\rangle.
\end{equation}

While the Hamiltonian $H(t)$ is periodic in time, the trajectory $|\psi(t)\rangle_{t\in\mathbb R}$ in the Hilbert space is usually far from periodic. However, one might expect that
\begin{conjecture} [vague statement] \label{conj}
Long-time properties of $|\psi(t)\rangle$ are periodic with period $1$.
\end{conjecture}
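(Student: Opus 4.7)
The plan is to reduce the continuous-time periodicity claim to a discrete-time equilibration problem for the Floquet operator $U_F$, and then to invoke the standard equilibration framework under two genericity conditions. First I would diagonalize $U_F=\sum_k e^{-i\theta_k}|\phi_k\rangle\langle\phi_k|$ and expand $|\psi(n)\rangle=\sum_k c_k e^{-in\theta_k}|\phi_k\rangle$ at integer times, with $c_k:=\langle\phi_k|\psi(0)\rangle$. Any $t\ge 0$ decomposes as $t=n+s$ with $n\in\mathbb{Z}_{\ge 0}$ and $s\in[0,1)$, and the periodicity of $H$ gives $|\psi(t)\rangle=U(0,s)|\psi(n)\rangle$. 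Hence for any observable $A$,
\begin{equation*}
\langle\psi(t)|A|\psi(t)\rangle=\langle\psi(n)|A_s|\psi(n)\rangle,\qquad A_s:=U(0,s)^{\dagger}A\,U(0,s),
\end{equation*}
so the desired periodicity $\langle A\rangle(t+1)\approx\langle A\rangle(t)$ is equivalent to the statement that the sequence $\langle\psi(n)|A_s|\psi(n)\rangle$ is approximately independent of $n$ for each fixed $s$. The same reduction applies to the reduced density matrix on a subsystem and so to the entanglement entropy.

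Next I would apply the Reimann--Linden--Popescu--Short--Winter equilibration argument. The infinite-time average is the diagonal-ensemble value $\overline{\langle A_s\rangle}=\sum_k|c_k|^2(A_s)_{kk}$, which is manifestly $U_F$-invariant and therefore strictly $1$-periodic in $t$. Under the \emph{non-resonance condition} that $\theta_j-\theta_k\equiv\theta_{j'}-\theta_{k'}\pmod{2\pi}$ has only trivial solutions, the standard second-moment calculation bounds the time-averaged squared deviation of $\langle A_s\rangle(n)$ from its mean by $\|A\|^2/d_{\mathrm{eff}}$, where $d_{\mathrm{eff}}:=1/\sum_k|c_k|^4$ is the effective dimension of the initial state in the Floquet eigenbasis. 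An analogous bound, in trace norm, can be obtained for the reduced density matrix.

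It then remains to verify two genericity statements. First, for a random product state $|\psi(0)\rangle$, the effective dimension $d_{\mathrm{eff}}$ should be exponentially large in the system size with overwhelming probability; this follows from concentration of measure once one knows the Floquet eigenstates are sufficiently delocalized with respect to the product basis. Second, non-resonance should hold for all but a measure-zero set of piecewise time-independent local Floquet Hamiltonians: any fixed nontrivial integer relation among the $\theta_k$ cuts out a real-analytic subvariety of the finite-dimensional parameter space of piece Hamiltonians and their durations, and as soon as one can exhibit a single point in parameter space where the relation fails, that subvariety has measure zero; the union of the finitely many such relations is then still measure zero.

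The main obstacle is expected to be the second genericity statement. The Floquet operator is a constrained product of matrix exponentials with geometrically local generators, so one must check that the real-analytic map from parameters to eigenphases is genuinely non-degenerate in enough directions that each resonance set is a proper subvariety rather than coinciding with the whole parameter space. This step is also where the physical content lives: a persistent period-$2$ discrete time crystal corresponds exactly to the resonance $\theta_j-\theta_k=\pi$ (and its multi-level cousins), so showing that such resonances fail on a set of full measure in the restricted space of local piecewise time-independent Floquet Hamiltonians is precisely what rules out strictly infinite-time time-crystalline behavior.
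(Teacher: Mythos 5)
Your overall architecture matches the paper's: write $t=n+s$, conjugate the observable to $A_s=U(0,s)^\dagger AU(0,s)$, run the second-moment equilibration bound against the Floquet diagonal ensemble under a non-resonance condition on the quasienergies, show the effective dimension is exponentially large for a Haar-random product state, and dispose of the non-resonance condition by a real-analyticity/measure-zero argument with a single witness point. However, two of your steps would fail as stated. First, you condition the effective-dimension bound on the Floquet eigenstates being ``sufficiently delocalized with respect to the product basis.'' That hypothesis is neither provable in general (it fails in the localized regimes the theorem is meant to cover) nor needed. The paper's Lemma \ref{l:eff} bounds $\e_{|\phi\rangle}\sum_j(\langle\phi|\Pi_j|\phi\rangle)^2$ using only the two-copy Haar average on each qubit, $\e\big((|\phi_l\rangle\langle\phi_l|)^{\otimes2}\big)=2\Pi_\textnormal{sym}/(d_\textnormal{loc}^2+d_\textnormal{loc})$, together with $\tr(\Pi_j^{\otimes2}\Pi_\textnormal{sym}^{\otimes N})\le d_j^2$. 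The only spectral input is that the degeneracy measure $D_1=d_\textnormal{loc}^{-N}\sum_jd_j^2$ is subexponential, which is itself part of the generic non-resonance condition; no statement about where the eigenvectors live is required. If you insist on delocalization you will be stuck.

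Second, your genericity argument treats each resonance relation among the eigenphases $\theta_k$ as cutting out a real-analytic subvariety of parameter space. But the individual $\theta_k$ are not globally real-analytic functions of the parameters: at level crossings the labeling is ambiguous and the branches are not analytic, so ``the set where $\theta_j-\theta_k\equiv\theta_{j'}-\theta_{k'}$'' is not obviously the zero set of an analytic function. The paper's fix is to symmetrize: form the single product $F=\prod(\mu_j\mu_k-\mu_{j'}\mu_{k'})$ over all admissible quadruples, observe it is a symmetric polynomial in the eigenvalues of $U_F$, hence a polynomial in the power sums $\tr(U_F^k)$, hence real-analytic in $(\mathbf h,\mathbf J)$; only then does the measure-zero-unless-identically-zero theorem apply. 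You correctly identify the witness point as the main obstacle, but note it also has a wrinkle you did not mention: one takes a static local Hamiltonian with non-degenerate gaps (Lemma \ref{l}) and scales it by a small $\epsilon$ so that all quasienergies lie in $(-\pi/2,\pi/2)$ and the resonance condition mod $2\pi$ reduces to an exact equality of real gaps; without that rescaling the mod-$2\pi$ wraparound could create spurious resonances.
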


If the property under consideration is the expectation value of an observable, Refs.~\cite{RSS12, LDM14PRL, LDM14PRE} give strong heuristic arguments for this conjecture. However, some important aspects of the conjecture remain to be understood.

Conjecture \ref{conj} does not hold for all Floquet systems. A simple counterexample is a non-interacting multi-spin system where $H(t)$ for any $t\in\mathbb R$ is a sum of on-site terms. If the initial state $|\psi(0)\rangle$ is a product state, so is $|\psi(t)\rangle$ for any $t\in\mathbb R$. Each spin evolves independently so that Conjecture \ref{conj} fails for almost every non-interacting system.

\begin{question} \label{q:1}
If Conjecture \ref{conj} holds for almost all Floquet systems, how to quantify ``almost all?''
\end{question}

We are particularly interested in initial states of low complexity, especially product states, partly because they are easier to realize experimentally. 

\begin{question} \label{q:2}
Even in non-integrable Floquet systems, there might be a small number of ``bad'' initial states for which Conjecture \ref{conj} fails. In the ensemble of product states, can we upper bound the fraction of such bad states? Does the fraction vanish in the thermodynamic limit?
\end{question}

\begin{question} \label{q:3}
In finite-size systems, properties of $|\psi(t)\rangle$ are not exactly but only approximately periodic even at long times. Can we upper bound the approximation error and how does it vanish in the thermodynamic limit?
\end{question}

\begin{question} \label{q:4}
What properties other than expectation values of observables does Conjecture \ref{conj} hold for? For example, is the entanglement entropy of $|\psi(t)\rangle$ (approximately) periodic with period $1$ at long times?
\end{question}

In this paper, we first formulate Conjecture \ref{conj} as a precise mathematical statement. Then, we prove the statement. In this process, we answer all questions above rigorously.

\paragraph{Results (informal statement).}Consider the space of piecewise time-independent Floquet systems with (geometrically) local interactions. We prove that for all but a measure zero set of systems in this space, starting from a random product state, with overwhelming probability many properties (including expectation values of observables and the entanglement entropy of a macroscopically large subsystem) at long times are approximately periodic with the same period as the Hamiltonian.

\section{Discrete time crystals}

A characteristic feature of a quantum time crystal is that its dynamics breaks the time-translation symmetry of the Hamiltonian \cite{Wil12}. However, (spontaneous) time-translation symmetry breaking can be defined in different ways. Sometimes additional requirements are imposed to exclude certain unnatural systems as time crystals. Thus, there are multiple definitions of time crystals in the literature (see Ref.~\cite{KMS19} for a review of some of the definitions), but none of them is universally accepted as the most standard. Partly for this reason, there is much debate \cite{Bru13Q, Bru13S, Wil13} in the community on what is a time crystal. See Ref.~\cite{ZLM+23} for a careful discussion on defining time crystals.

By definition, the Hamiltonian of a Floquet system (with period $1$) has the time-translation symmetry $t\to t+\mathbb Z$. A Floquet system is a time crystal if the dynamics of some property does not even approximately have this symmetry but approximately has a smaller symmetry $t\to t+m\mathbb Z$ for some integer $m>1$. Since the symmetry group $\mathbb Z$ of the Hamiltonian is discrete, Floquet time crystals \cite{EBN16} are also called discrete time crystals (DTC) \cite{YPPV17, EMNY20, ZLM+23}.

The result stated in the last paragraph of the introduction implies that in almost every piecewise time-independent Floquet system of arbitrarily large but finite size, DTC behavior does not persist to strictly infinite time or the lifetime of the DTC order is at most finite.\footnote{Equivalently, in piecewise time-independent Floquet systems of arbitrarily large but finite size, realizing DTC with strictly infinite lifetime requires fine-tuning the parameters of the Hamiltonian to infinite precision.} Ample numerical evidence and heuristic arguments in the literature \cite{vKS16, EBN16, YPPV17, LM17, ZLM+23, KMS19} suggest that in a particular class of models, the DTC lifetime is exponential in the system size. Our contribution is to prove a precise mathematical statement of the non-infinite lifetime of the DTC order in a general setting. To our knowledge, this is the first rigorous result on the lifetime of DTC.

\section{Basic results}

In this section, we present the basic version of our result. We specify the models and initial states that our results apply to. We also define the concept of approximately periodic functions in order to quantify the accuracy with which Conjecture \ref{conj} holds. In the next section, we will extend the main result of this section in many directions.

In a system of $N$ qubits (spin-$1/2$'s) labeled by $1,2,\ldots,N$, let
\begin{equation}
\sigma^x_l=\begin{pmatrix}0&1\\1&0\end{pmatrix},\quad
\sigma^y_l=\begin{pmatrix}0&-i\\i&0\end{pmatrix},\quad
\sigma^z_l=\begin{pmatrix}1&0\\0&-1\end{pmatrix}
\end{equation}
be the Pauli matrices for qubit $l$. Let
\begin{equation}
\mathbf h:=(h_l^x,h_l^z)_{1\le l\le N}\in\mathbb R^{2N},\quad\mathbf J:=(J_l)_{1\le l\le N-1}\in\mathbb R^{N-1},
\end{equation}
and consider the piecewise time-independent Floquet system (with period $1$)
\begin{equation} \label{eq:modelb}
H(t)=
\begin{cases}
H'_1:=\sum_{l=1}^Nh_l^z\sigma_l^z+\sum_{l=1}^{N-1}J_l\sigma_l^z\sigma_{l+1}^z,&0\le t<1/2\\
H'_2:=\sum_{l=1}^Nh_l^x\sigma_l^x,&1/2\le t<1
\end{cases}.
\end{equation}
This model is in different dynamical phases for different values of $(\mathbf h,\mathbf J)$ \cite{YPPV17}.

\begin{definition} [Haar-random product state \cite{LQ19, Hua21ISIT, Hua22TIT, Hua23}] \label{def:haar}
\begin{equation} 
|\phi\rangle=\bigotimes_{l=1}^N|\phi_l\rangle
\end{equation}
is a Haar-random product state if each $|\phi_l\rangle$ (the state of qubit $l$) is chosen independently and uniformly at random with respect to the Haar measure.
\end{definition}

\begin{definition} [approximately periodic function] \label{def:apf}
A function $f:\mathbb R\to\mathbb R$ is $\epsilon$-periodic (with period $1$) if there exists another function $\bar f:[0,1)\to\mathbb R$ such that
\begin{equation} \label{eq:apf}
\liminf_{\mathbb N\ni M\to\infty}\frac1M\left|\left\{m\in\{0,1,2,\ldots,M-1\}:\int_0^1|f(m+x)-\bar f(x)|^2\,\mathrm dx\le\epsilon\right\}\right|\ge1-\epsilon.
\end{equation}
\end{definition}

(\ref{eq:apf}) means the following. $\int_0^1|f(m+x)-\bar f(x)|^2\,\mathrm dx$ is the squared $L^2$ distance between the restriction of $f$ to one period $[m,m+1)$ and a particular function $\bar f$. We say that $[m,m+1)$ is a good period if the squared distance is $\le\epsilon$. Otherwise, it is a bad period. We count the number of good periods and require that the fraction of good periods in the limit of large $M$ is $\ge1-\epsilon$. In other words, we allow a small fraction of bad periods in which $f$ deviates significantly from periodic. In all good periods, $f$ must be approximately periodic with error $\le\epsilon$. We use the limit inferior in (\ref{eq:apf}) because the limit may not exist.

Let $A$ be an observable (Hermitian operator). Assume without loss of generality that its operator norm is $\|A\|=1$.

We use standard asymptotic notation. Let $g_1,g_2:\mathbb R^+\to\mathbb R^+$ be two functions. One writes $g_1(x)=O(g_2(x))$ if and only if there exist constants $C,x_0>0$ such that $g_1(x)\le Cg_2(x)$ for all $x>x_0$; $g_1(x)=\Omega(g_2(x))$ if and only if there exist constants $C,x_0>0$ such that $g_1(x)\ge Cg_2(x)$ for all $x>x_0$; $g_1(x)=\Theta(g_2(x))$ if and only if there exist constants $C_1,C_2,x_0>0$ such that $C_1g_2(x)\le g_1(x)\le C_2g_2(x)$ for all $x>x_0$; $g_1(x)=o(g_2(x))$ if and only if for any constant $C>0$ there exists a constant $x_0>0$ such that $g_1(x)<Cg_2(x)$ for all $x>x_0$.

\begin{theorem} \label{t:b}
Let the initial state $|\psi(0)\rangle$ be a Haar-random product state (Definition \ref{def:haar}). For the model (\ref{eq:modelb}), the set
\begin{equation} \label{eq:b}
\mathbb R^{3N-1}\setminus\left\{(\mathbf h,\mathbf J)\in\mathbb R^{3N-1}:\Pr_{|\psi(0)\rangle}\big(\langle\psi(t)|A|\psi(t)\rangle~\textnormal{is}~e^{-\Omega(N)}\textnormal{-periodic}\big)=1-e^{-\Omega(N)}\right\}
\end{equation}
has Lebesgue measure zero.
\end{theorem}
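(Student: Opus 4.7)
The plan is to diagonalize the Floquet operator, identify a natural reference function $\bar f$, and reduce $\epsilon$-periodicity to an equilibration estimate in the Floquet eigenbasis; then show this estimate is $e^{-\Omega(N)}$ for a Haar-random product state whenever the parameters avoid a measure-zero ``resonance'' set. Write $U_F=\sum_n e^{-i\theta_n}|n\rangle\langle n|$, expand $|\psi(0)\rangle=\sum_n c_n|n\rangle$, and parametrize $t=m+x$ with $m\in\mathbb Z_{\ge0}$, $x\in[0,1)$. Since $|\psi(m+x)\rangle=U(0,x)U_F^m|\psi(0)\rangle$,
\[
f(m+x):=\langle\psi(m+x)|A|\psi(m+x)\rangle=\sum_{n,n'}c_n^*c_{n'}e^{i(\theta_n-\theta_{n'})m}A_{nn'}(x),
\]
where $A_{nn'}(x):=\langle n|U(0,x)^\dagger A\,U(0,x)|n'\rangle$. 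The natural candidate is the diagonal ensemble $\bar f(x):=\sum_n|c_n|^2A_{nn}(x)$, which by construction is a function of $x\in[0,1)$ only, so that $f(m+x)-\bar f(x)$ contains only off-diagonal $(n\ne n')$ terms.

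Next I would compute the Ces\`aro average $\overline E:=\lim_{M\to\infty}\frac1M\sum_{m=0}^{M-1}\int_0^1|f(m+x)-\bar f(x)|^2\,\mathrm dx$. Expanding the square and time-averaging picks out quadruples $(n,n',k,k')$ with $n\ne n'$, $k\ne k'$, and $\theta_n-\theta_{n'}\equiv\theta_k-\theta_{k'}\pmod{2\pi}$. Under the ``non-resonance'' condition that all nonzero Floquet gaps $\theta_n-\theta_{n'}$ are distinct modulo $2\pi$, the only surviving terms have $(k,k')=(n,n')$, yielding
\[
\overline E=\sum_{n\ne n'}|c_n|^2|c_{n'}|^2\int_0^1|A_{nn'}(x)|^2\,\mathrm dx.
\]
To handle genericity I would note that $U_F$ and hence its eigenphases are real-analytic in $(\mathbf h,\mathbf J)$ away from a codimension-$\ge 1$ degeneracy locus; each single gap identity is then an analytic equation that either holds on a null set or identically, and identically is ruled out by exhibiting one parameter point (e.g., a small generic perturbation of a point where $U_F$ can be explicitly diagonalized) at which all relevant identities fail. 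A countable union of null sets is null, giving the measure-zero exclusion.

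To bound $\mathbb E_{|\psi(0)\rangle}[\overline E]$ I would use the second-moment identity for Haar-random product states,
\[
\mathbb E[|\phi\rangle\langle\phi|^{\otimes2}]=\bigotimes_{l=1}^N\frac{I_l+S_l}{6},\qquad \mathbb E[|c_n|^2|c_{n'}|^2]=\frac1{6^N}\sum_{P\subseteq\{1,\dots,N\}}\tr[\rho_{n,P}\,\rho_{n',P}],
\]
where $\rho_{n,P}:=\tr_{\bar P}|n\rangle\langle n|$. For generic interacting parameters the Floquet eigenstates are volume-law entangled, giving $\tr[\rho_{n,P}\rho_{n',P}]\lesssim 2^{-|P|}$ by Cauchy--Schwarz; then $\sum_P 2^{-|P|}=(3/2)^N$ and $\sum_{n,n'}\int_0^1|A_{nn'}(x)|^2\,\mathrm dx\le 2^N\|A\|^2$, so $\mathbb E[\overline E]\le 2^{-N+O(1)}$. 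Two Markov steps (first in $|\psi(0)\rangle$, then in $m$) convert this into the claim that $f$ is $e^{-\Omega(N)}$-periodic with probability $1-e^{-\Omega(N)}$.

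The main obstacle is the volume-law estimate on $\tr[\rho_{n,P}\rho_{n',P}]$: the non-interacting counterexample with $\mathbf J=0$ shows the estimate can fail catastrophically, so one cannot simply invoke ETH abstractly. The genericity half of the argument must therefore upgrade the spectral non-resonance exclusion to an eigenstate-structure statement, presumably via the author's moment calculus for Haar-random product states in \cite{LQ19, Hua21ISIT, Hua22TIT, Hua23} combined with an analyticity-in-parameters argument showing that the relevant eigenstate overlap polynomials do not vanish identically in $(\mathbf h,\mathbf J)$. Everything else, including the time-averaging and the Markov-type conversions, is routine.
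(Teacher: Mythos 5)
Your setup (Floquet eigenbasis, diagonal ensemble as $\bar f$, Ces\`aro averaging, non-resonance of quasienergy gaps, analyticity in $(\mathbf h,\mathbf J)$ plus one explicit witness point, Haar second moments, Markov) matches the paper's strategy, but the step you yourself flag as ``the main obstacle'' is a genuine gap, and it is exactly the step the paper avoids. You bound $\overline E=\sum_{n\ne n'}\mathbb E[|c_n|^2|c_{n'}|^2]\int_0^1|A_{nn'}(x)|^2\,\mathrm dx$ term by term, which forces you to control $\mathbb E[|c_n|^2|c_{n'}|^2]$ via eigenstate purities $\tr[\rho_{n,P}\rho_{n',P}]$ and hence to assume volume-law (near-maximal R\'enyi-2) entanglement of every Floquet eigenstate. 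That is an ETH-type statement that is not known to hold for generic parameters, is not implied by the spectral non-resonance condition, and fails outright at $\mathbf J=0$; the measure-zero machinery you invoke only controls the spectrum, not the eigenvectors, so this route does not close.

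The paper's fix is a rearrangement that removes all dependence on eigenstate structure. Before taking the Haar expectation, it bounds the dephased variance for a \emph{fixed} initial state by Cauchy--Schwarz in the summation indices:
\begin{equation*}
\sum_{j\neq k}c_j^2c_k^2|A_{jk}|^2\le\sqrt{\sum_{j}c_j^4\,(AA^\dag)_{jj}\times\sum_{k}c_k^4\,(A^\dag A)_{kk}}\le\sum_{j}c_j^4=1/D_\textnormal{eff},
\end{equation*}
using only $\|A\|=1$ (and a factor $D_2$ for the number of resonant quadruples, with $D_2=1$ generically). Everything then reduces to showing $D_\textnormal{eff}=e^{\Omega(N)}$ with probability $1-e^{-\Omega(N)}$, and the second-moment computation $\mathbb E_{|\phi\rangle}(\langle\phi|\Pi_j|\phi\rangle)^2=\tr\bigl(\Pi_j^{\otimes2}(2\Pi_\textnormal{sym}/6)^{\otimes N}\bigr)\le d_j^2(2/6)^N$ needs only the multiplicities $d_j$ (i.e., $D_1=1$, again a spectral genericity statement), not the entanglement of the $|j\rangle$. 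Secondary points: your genericity sketch should symmetrize the gap conditions into a single polynomial in $\tr(U_F^k)$ (individual eigenphases are not globally analytic functions of the parameters), and the explicit witness for the model (\ref{eq:modelb}) is obtained by taking all couplings of size $\epsilon\to0^+$ and comparing $e^{-iH'_2/2}e^{-iH'_1/2}$ with $e^{-i(H'_1+H'_2)/2}$ via a Trotter-type perturbation bound, importing a time-independent Hamiltonian with non-degenerate gaps. These are repairable details; the eigenstate-entanglement assumption is the part that must be replaced, not patched.
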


Equation (\ref{eq:modelb}) is an ensemble of Floquet systems parameterized by $(\mathbf h,\mathbf J)\in\mathbb R^{3N-1}$. Theorem \ref{t:b} says that for all but a measure zero set of Floquet systems in this ensemble, Conjecture \ref{conj} holds with exponential accuracy: $\langle\psi(t)|A|\psi(t)\rangle$ is $e^{-\Omega(N)}$-periodic for all but an exponentially small fraction of initial product states.

Theorem \ref{t:b} answers Questions \ref{q:1}, \ref{q:2}, \ref{q:3} for the model (\ref{eq:modelb}). Specifically, ``almost all'' in Question \ref{q:1} means ``full Lebesgue measure.'' In the ensemble of product states, the fraction of bad initial states defined in Question \ref{q:2} is exponentially small in the system size. The approximation error in Question \ref{q:3} is also exponentially small in the system size.

Due to the limit $M\to\infty$ in (\ref{eq:apf}), whether a function $f$ is approximately periodic is determined by the values $f(x)$ for sufficiently large $x$. Thus, ``$\langle\psi(t)|A|\psi(t)\rangle$ is $e^{-\Omega(N)}$-periodic'' in (\ref{eq:b}) is a statement at the longest time scales. Theorem \ref{t:b} is silent on the dynamic properties of the model (\ref{eq:modelb}) at short and intermediate times.

\section{Full results}

In this section, we extend Theorem \ref{t:b} in many directions.

A Floquet system $H(t)$ (with period $1$) is mathematically a function from $[0,1)$ to Hermitian matrices. To extend Theorem \ref{t:b} to ensembles of general Floquet systems, we need to define ``measure zero''  in the space of functions from $[0,1)$ to Hermitian matrices. However, defining ``measure zero'' in such infinite-dimensional function spaces is a well-known mathematical challenge, although progress has been made (see the review article \cite{OY05}).

In order to be less mathematically involved, in the remainder of this section we only consider piecewise time-independent Floquet systems (of finite size), which are parameterized by a finite number of real parameters. Thus, we avoid infinite-dimensional function spaces and can use the Lebesgue measure. Note that many prototypical Floquet systems (including the models of many-body localization and DTC in Refs.~\cite{PCPA15, PPHA15, vKS16, KLMS16, EBN16, YPPV17}) are piecewise time independent.

Specifically, we define ensembles of piecewise time-independent Floquet systems with nearest-neighbor interactions in a chain of $N$ qubits. We prove that in each ensemble, Conjecture \ref{conj} holds with exponential accuracy for all but a Lebesgue measure zero set of systems. Similar results can be proved in a similar way for other types of piecewise time-independent Floquet systems including qudit systems with short-range interactions in higher spatial dimensions or even with non-local interactions.

Let $n$ be a positive integer and
\begin{gather}
T\in\{(t_0,t_1,t_2,\ldots,t_{n-1},t_n):0=t_0<t_1<t_2<\cdots<t_{n-1}<t_n=1\},\\
\alpha:=(\alpha_j^u)_{1\le j\le n}^{u\in\{x,y,z\}}\in\{0,1\}^{3n},\quad\gamma:=(\gamma_j^{u,v})_{1\le j\le n}^{u,v\in\{x,y,z\}}\in\{0,1\}^{9n}.
\end{gather}
Each tuple $(n,T,\alpha,\gamma)$ defines an ensemble of piecewise time-independent Floquet systems (with period $1$)
\begin{equation} \label{eq:f}
H_{n,T,\alpha,\gamma}(t)=H_j:=\sum_{l=1}^N\sum_{u\in\{x,y,z\}}\alpha_j^u h_{j,l}^u\sigma_l^u+\sum_{l=1}^{N-1}\sum_{u,v\in\{x,y,z\}}\gamma_j^{u,v} J_{j,l}^{u,v}\sigma_l^u\sigma_{l+1}^v,\quad t_{j-1}\le t<t_j,
\end{equation}
where
\begin{equation}
\mathbf h:=(h_{j,l}^u)_{1\le j\le n;1\le l\le N}^{u\in\{x,y,z\}}\in\mathbb R^{3nN},\quad\mathbf J:=(J_{j,l}^{u,v})_{1\le j\le n;1\le l\le N-1}^{u,v\in\{x,y,z\}}\in\mathbb R^{9n(N-1)}.
\end{equation}
In words, $n$ is the number of pieces in one period of the piecewise time-independent Floquet system. $t_0,t_1,\ldots,t_n$ are the times when the Hamiltonian can change abruptly. $\alpha_j^u$ ($\gamma_j^{u,v}$) indicates whether the $j$th piece of the Hamiltonian has a particular type of on-site (nearest-neighbor) terms. $\mathbf h$ and $\mathbf J$ give the coefficients of the terms.

\begin{theorem} \label{t:f}
Let the initial state $|\psi(0)\rangle$ be a Haar-random product state. For any tuple $(n,T,\alpha,\gamma)$ such that $\alpha_1^x=\alpha_1^z=\gamma_1^{zz}=1$, the set
\begin{equation}
\mathbb R^{12nN-9n}\setminus\left\{(\mathbf h,\mathbf J)\in\mathbb R^{12nN-9n}:\Pr_{|\psi(0)\rangle}\big(\langle\psi(t)|A|\psi(t)\rangle~\textnormal{is}~e^{-\Omega(N)}\textnormal{-periodic}\big)=1-e^{-\Omega(N)}\right\}
\end{equation}
has Lebesgue measure zero.
\end{theorem}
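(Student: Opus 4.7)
The plan is to combine a standard dephasing argument with a second-moment estimate on Haar-random product states. Diagonalize the Floquet operator as $U_F=\sum_k e^{-i\phi_k}|\phi_k\rangle\langle\phi_k|$, set $g(s):=U(0,s)^\dagger A U(0,s)$ and $c_k:=\langle\phi_k|\psi(0)\rangle$, and for $t=m+s$ with $m\in\mathbb N$, $s\in[0,1)$ write
\begin{equation}
\langle\psi(t)|A|\psi(t)\rangle=\sum_{j,k}c_j^*c_k\,e^{i(\phi_j-\phi_k)m}\langle\phi_j|g(s)|\phi_k\rangle.
\end{equation}
The candidate periodic function is the diagonal ensemble $\bar f(s):=\sum_k|c_k|^2\langle\phi_k|g(s)|\phi_k\rangle$, and the off-diagonal terms are shown to dephase.

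First I would show that for all but a Lebesgue measure zero set of $(\mathbf h,\mathbf J)$, $U_F$ has simple spectrum and all its Floquet gaps $\{\phi_j-\phi_k \bmod 2\pi\}_{j\ne k}$ are distinct. Since $U_F$ depends real-analytically on $(\mathbf h,\mathbf J)$ through the product $\prod_j e^{-iH_j(t_j-t_{j-1})}$, each degeneracy condition is a proper real-analytic equation and therefore holds either on all of parameter space or on a measure zero set; it thus suffices to exhibit a single point where non-degenerate gaps hold. Using the hypothesis $\alpha_1^x=\alpha_1^z=\gamma_1^{zz}=1$, I would set $h_{j,l}^u=J_{j,l}^{u,v}=0$ for every $j>1$, collapsing $U_F$ to $e^{-iH_1 t_1}$, and then zero out all $y$-terms and cross terms of $H_1$, leaving a mixed-field Ising Hamiltonian of the same structure as in (\ref{eq:modelb}). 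Floquet-gap non-degeneracy for generic coefficients of this reduced Hamiltonian is precisely the spectral input that enters the proof of Theorem \ref{t:b}.

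Under non-degenerate gaps, orthogonality of the complex exponentials gives
\begin{equation}
Q:=\lim_{M\to\infty}\frac{1}{M}\sum_{m=0}^{M-1}\int_0^1\bigl|\langle\psi(m+s)|A|\psi(m+s)\rangle-\bar f(s)\bigr|^2\,\mathrm ds=\sum_{j\ne k}|c_j|^2|c_k|^2\int_0^1|\langle\phi_j|g(s)|\phi_k\rangle|^2\,\mathrm ds.
\end{equation}
For a Haar-random product state, the identity $\mathbb E|\psi(0)\rangle\langle\psi(0)|^{\otimes 2}=\bigotimes_{l=1}^N(I_l+F_l)/6$ (with $F_l$ the swap on the two copies of qubit $l$) gives $\mathbb E[|c_j|^2|c_k|^2]=6^{-N}\sum_{S\subseteq\{1,\ldots,N\}}\mathrm{tr}(\rho_j^{[S]}\rho_k^{[S]})$, where $\rho_k^{[S]}$ denotes the reduced density matrix of $|\phi_k\rangle$ on $S$. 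Bounding each trace by $1$ via Cauchy-Schwarz yields $\mathbb E[|c_j|^2|c_k|^2]\le 3^{-N}$, independently of the eigenbasis. Combined with $\sum_{j,k}\int_0^1|\langle\phi_j|g(s)|\phi_k\rangle|^2\,\mathrm ds=\mathrm{tr}(A^2)\le 2^N$ (using $\|A\|\le 1$), this gives $\mathbb E[Q]\le (2/3)^N=e^{-\Omega(N)}$, and Markov's inequality in the initial state then yields $Q\le e^{-\Omega(N)}$ with probability $1-e^{-\Omega(N)}$.

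Markov's inequality in the discrete time variable $m$ finally converts the bound on $Q$ into approximate periodicity: the fraction of periods with $\int_0^1|\langle\psi(m+s)|A|\psi(m+s)\rangle-\bar f(s)|^2\,\mathrm ds>\sqrt{Q}$ is at most $\sqrt{Q}$ in the $M\to\infty$ limit, so the observable expectation is $\sqrt{Q}$-periodic in the sense of Definition \ref{def:apf}, and $\sqrt{Q}=e^{-\Omega(N)}$. The hard part is the genericity argument for non-degenerate Floquet gaps; the hypothesis $\alpha_1^x=\alpha_1^z=\gamma_1^{zz}=1$ is exactly what enables a reduction to a Hamiltonian whose spectral non-degeneracy is already established in Theorem \ref{t:b}. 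The remaining steps are routine dephasing and second-moment arguments.
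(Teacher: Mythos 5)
Your proposal is correct and follows essentially the same route as the paper: the Floquet diagonal ensemble as the reference function $\bar f$, orthogonality of $e^{i(\phi_j-\phi_k)m}$ under non-degenerate Floquet gaps, a second-moment calculation over Haar-random product states via $\e(|\phi_l\rangle\langle\phi_l|)^{\otimes2}=2\Pi_\textnormal{sym}/6$, and a real-analyticity-plus-single-witness argument for genericity. Your bookkeeping differs mildly from the paper's: you bound $\e[Q]$ directly term by term, whereas the paper first proves the deterministic bound $\Delta A\le D_2/D_\textnormal{eff}$ (Lemma \ref{l:eq}) and separately shows $D_\textnormal{eff}=e^{\Omega(N)}$ with high probability (Lemma \ref{l:eff}, Corollary \ref{c:t2}); both yield the same $(2/3)^N$ estimate. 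The one step you gloss over is in the witness construction: non-degenerate gaps of the Hermitian matrix $H_1$ do not by themselves give non-degenerate Floquet gaps of $e^{-iH_1t_1}$, because quasienergy differences are only defined mod $2\pi$ and could collide after wraparound. The paper handles this by scaling the witness coefficients by a small $\epsilon>0$ so that all quasienergies lie in $(-\pi/2,\pi/2)$, after which the mod-$2\pi$ condition reduces to an exact equality of Hamiltonian gaps. With that easy amendment your argument is complete; the genuinely nontrivial external input is the existence of a mixed-field Ising Hamiltonian with non-degenerate spectrum and non-degenerate gaps (Lemma \ref{l}, from Ref.~\cite{Hua21PP}), which you correctly identify as the same spectral input underlying Theorem \ref{t:b}.
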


The condition $\alpha_1^x=\alpha_1^z=\gamma_1^{zz}=1$ in Theorem \ref{t:f} is not absolutely necessary if some other elements of $\alpha$ or $\gamma$ are $1$. For example, Theorem \ref{t:b} corresponds to the case that $\alpha_1^z=\gamma_1^{zz}=\alpha_2^x=1$ while all other elements of $\alpha,\gamma$ are $0$.

We also prove the approximate periodicity of reduced density matrices. Let $\mathbb D$ be the set of density matrices (positive semi-definite matrices with unit trace) and $\|\cdot\|_1$ denote the trace norm.

\begin{definition} [approximately periodic matrix function] \label{def:apmf}
A matrix function $f:\mathbb R\to\mathbb D$ is $\epsilon$-periodic (with period $1$) if there exists another matrix function $\bar f:[0,1)\to\mathbb D$ such that
\begin{equation}
\liminf_{\mathbb N\ni M\to\infty}\frac1M\left|\left\{m\in\{0,1,2,\ldots,M-1\}:\int_0^1\|f(m+x)-\bar f(x)\|_1\,\mathrm dx\le\epsilon\right\}\right|\ge1-\epsilon.
\end{equation}
\end{definition}

Let $S$ be an arbitrary (not necessarily connected) subsystem of $L$ qubits and $\bar S$ be the complement of $S$ (rest of the system). Let $\psi(t)_S:=\tr_{\bar S}|\psi(t)\rangle\langle\psi(t)|$ be the reduced density matrix of $S$ at time $t$.

\begin{theorem} \label{t:r}
Let the initial state $|\psi(0)\rangle$ be a Haar-random product state. For $L\le0.29248N$ and any tuple $(n,T,\alpha,\gamma)$ such that $\alpha_1^x=\alpha_1^z=\gamma_1^{zz}=1$, the set
\begin{equation} \label{eq:r}
\mathbb R^{12nN-9n}\setminus\left\{(\mathbf h,\mathbf J)\in\mathbb R^{12nN-9n}:\Pr_{|\psi(0)\rangle}\big(\psi(t)_S~\textnormal{is}~e^{-\Omega(N)}\textnormal{-periodic}\big)=1-e^{-\Omega(N)}\right\}
\end{equation}
has Lebesgue measure zero.
\end{theorem}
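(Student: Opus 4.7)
The plan is to reduce Theorem \ref{t:r} to Theorem \ref{t:f} by expanding the reduced density matrix in the Pauli basis on $S$ and applying Theorem \ref{t:f} to each Pauli observable separately. For any Hermitian $L$-qubit matrix $A$, the identity $\tr(A^2)=2^{-L}\sum_{P\in\{I,X,Y,Z\}^{\otimes L}}(\tr(PA))^2$ combined with the rank bound $\|A\|_1\le 2^{L/2}\|A\|_2$ (valid since $A$ acts on a $2^L$-dimensional space) reduces control of $\|\psi(t)_S-\bar f(t)\|_1$ to simultaneous control of the $4^L$ scalar differences $\langle\psi(t)|P\otimes I_{\bar S}|\psi(t)\rangle-\tr(P\bar f(t))$. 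The natural candidate for $\bar f$ is the diagonal-ensemble reduced density matrix
\[
\bar f(x):=\tr_{\bar S}\bigl[U(0,x)\bar\rho\,U(0,x)^\dagger\bigr],\qquad \bar\rho:=\sum_k|c_k|^2|k\rangle\langle k|,
\]
where $U_F=\sum_k e^{-i\theta_k}|k\rangle\langle k|$ is the spectral decomposition of the Floquet operator and $c_k=\langle k|\psi(0)\rangle$; its Pauli coefficients $\tr(P\bar f(x))$ are the natural approximants for the observables $P\otimes I_{\bar S}$ in Definition \ref{def:apf}.

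I would then apply Theorem \ref{t:f} to each of the $4^L$ Pauli observables $P\otimes I_{\bar S}$ (each with operator norm $1$), obtaining for each $P$ a Lebesgue measure zero exceptional set $E_P\subset\mathbb{R}^{12nN-9n}$; their union $\bigcup_P E_P$ is still measure zero. Outside this union, a union bound over $P$ says that the fraction of Haar-random product initial states that are bad for some $P$ is at most $4^Le^{-\Omega(N)}$, and, conditional on a good initial state, the fraction of periods $m$ simultaneously bad for some $P$ is also at most $4^Le^{-\Omega(N)}$. On any simultaneously good $m$,
\[
\int_0^1\|\psi(m+x)_S-\bar f(x)\|_1^2\,\mathrm dx\le 2^L\!\int_0^1\!\|\psi(m+x)_S-\bar f(x)\|_2^2\,\mathrm dx=\sum_P\!\int_0^1\!\bigl(\langle\psi(m+x)|P\otimes I_{\bar S}|\psi(m+x)\rangle-\tr(P\bar f(x))\bigr)^2\mathrm dx\le 4^Le^{-\Omega(N)},
\]
and Cauchy--Schwarz in $x$ converts this into $\int_0^1\|\psi(m+x)_S-\bar f(x)\|_1\,\mathrm dx\le 2^Le^{-\Omega(N)/2}$, as required by Definition \ref{def:apmf}.

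The specific threshold $L\le 0.29248\,N$ is exactly what is needed to absorb the two sources of $4^L$ overhead. Writing the exponent in Theorem \ref{t:f} as $e^{-\Omega_0 N}$, both $4^Le^{-\Omega_0 N}$ (bad-fraction bound) and $2^Le^{-\Omega_0 N/2}$ (on-good-period error) must remain $e^{-\Omega(N)}$, forcing $L<\Omega_0 N/(2\log 2)$. The value $0.29248=\tfrac12\log_2(3/2)$ corresponds to $\Omega_0=\log(3/2)$, which is the rate delivered by the diagonal-ensemble analysis underlying Theorem \ref{t:f} and traces back to the Renyi-2 participation $\e\sum_k|c_k|^4=(2/3)^N$ of a Haar-random product state in any product eigenbasis (a consequence of the qubit Haar moment $\e|\langle\alpha|\phi\rangle|^4=1/3$).

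I expect the main obstacle to be bookkeeping consistency rather than any new conceptual step. One has to verify (i) that Theorem \ref{t:f}'s exponent $\Omega_0=\log(3/2)$ is uniform over observables $A$ with $\|A\|=1$, so that a single rate survives the union bound over $4^L$ Paulis; (ii) that the approximant $\bar f_P(x)$ produced by the proof of Theorem \ref{t:f} can be chosen to coincide with (or differ negligibly from) the Pauli component $\tr(P\bar f(x))$ of the diagonal-ensemble reduced density matrix, so that the $4^L$ scalar approximants assemble into a genuine $\mathbb D$-valued function (which is automatic for the chosen $\bar f$ since $U(0,x)\bar\rho\,U(0,x)^\dagger$ is a density matrix); and (iii) that the finitely many exceptional sets $E_P$ really merge into one measure-zero set without inflating any probability bound. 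Once these are in place the computation above is essentially mechanical, and the restriction $L\le 0.29248\,N$ is precisely the point at which the $4^L$ Pauli overhead ceases to beat the exponential decay inherited from Theorem \ref{t:f}.
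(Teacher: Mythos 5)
Your reduction (expand $\psi(t)_S$ in an operator basis on $S$, control each scalar coefficient, convert back to trace norm via $\|\cdot\|_1\le 2^{L/2}\|\cdot\|_2$, and use the diagonal-ensemble reduced state as $\bar f$) is the same skeleton as the paper's Lemma \ref{l:r}, which follows Short's calculation. But the way you combine the $4^L$ scalar statements has a genuine quantitative gap that prevents you from reaching $L\le 0.29248N$. Theorem \ref{t:f} (via Lemma \ref{l:eq}) delivers, per observable, $\epsilon$-periodicity with $\epsilon=\sqrt{D_2/D_\textnormal{eff}}\approx(2/3)^{N/2}$; this single $\epsilon$ controls \emph{both} the fraction of bad periods \emph{and} the squared $L^2$ error on good periods, because it is obtained by applying Markov's inequality to the time-averaged squared deviation $D_2/D_\textnormal{eff}$. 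Union-bounding these post-Markov statements over $4^L$ Paulis therefore costs $4^L$ against $(2/3)^{N/2}$, not against $(2/3)^N$: the fraction of periods bad for some $P$ is bounded only by $4^L(2/3)^{N/2}=e^{2L\ln 2-\frac{N}{2}\ln(3/2)}$, which exceeds $1$ once $L>\tfrac14\log_2(3/2)N\approx0.1462N$. Your identification $\Omega_0=\log(3/2)$ is off by a factor of $2$ for exactly this reason, so the stated threshold $\tfrac12\log_2(3/2)$ is out of reach of your bookkeeping. The fix, which is what the paper does, is to defer Markov to the very end: sum the pre-Markov time-averaged squared coefficients (the analogue of Eq.~(\ref{eq:lop})) over the operator basis to get $\lim_M\frac1M\int_0^M\|\psi(t)_S-\bar\psi\|_1\,\mathrm dt\le d_S\sqrt{D_2/D_\textnormal{eff}}$, and apply Markov once to the trace-norm integral, yielding $\sqrt[4]{d_S^2D_2/D_\textnormal{eff}}$-periodicity (Lemma \ref{l:r}); this tolerates $d_S^2=4^L$ up to $(3/2)^N$, i.e.\ $L\le\tfrac12\log_2(3/2)N$.

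A second, independent failure is the union bound over initial states: the per-observable failure probability in Theorem \ref{t:f} comes from Markov applied to $\e\sum_jc_j^4\le D_1(2/3)^N$ with threshold $(3/2-10^{-6})^{-N}$, so it is $e^{-cN}$ with $c=\Theta(10^{-6})$ only. Multiplying by $4^L=e^{\Theta(N)}$ destroys it for any $L=\Theta(N)$; you cannot retune the Markov threshold to boost $c$ past $2L\ln 2$ without losing the exponential lower bound on $D_\textnormal{eff}$ that you need elsewhere. Your item (i) (uniformity of the exponent over observables) does not address this — what is actually needed is that the bad event be literally the \emph{same} event for all $4^L$ observables, namely $\{D_\textnormal{eff}<(3/2-10^{-6})^N\}$, so that no union bound over initial states is taken at all. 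Both repairs require opening up the proof of Theorem \ref{t:f} rather than invoking it as a black box, which is why the paper proves Lemma \ref{l:r} directly from $D_2$ and $D_\textnormal{eff}$ instead of deducing Theorem \ref{t:r} from Theorem \ref{t:f}.
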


\begin{corollary} \label{c:r}
For any function $f:\mathbb D\to\mathbb R$ such that
\begin{equation}
\|\rho_1-\rho_2\|_1=e^{-\Omega(N)}\implies|f(\rho_1)-f(\rho_2)|=e^{-\Omega(N)},
\end{equation}
Theorem \ref{t:r} remains valid upon replacing $\psi(t)_S$ in (\ref{eq:r}) by $f(\psi(t)_S)$. An example of such a function is the von Neumann entropy of $\psi(t)_S$ \cite{Fan73, Aud07} or the entanglement entropy of $|\psi(t)\rangle$ between $S$ and $\bar S$:
\begin{equation}
f(\psi(t)_S)=-\tr\big(\psi(t)_S\ln\psi(t)_S\big).
\end{equation}
\end{corollary}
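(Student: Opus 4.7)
The plan is to derive Corollary \ref{c:r} from Theorem \ref{t:r} by a Markov-style truncation argument. By Theorem \ref{t:r}, outside a measure-zero set of parameters, with probability $1-e^{-\Omega(N)}$ over the initial product state the matrix function $\psi(t)_S$ is $\epsilon$-periodic in the sense of Definition \ref{def:apmf} with $\epsilon=e^{-\Omega(N)}$. Fix the witnessing $\bar\psi_S:[0,1)\to\mathbb D$ and take the natural candidate $\bar f(x):=f(\bar\psi_S(x))$; the goal is to verify Definition \ref{def:apf} for $f(\psi(t)_S)$ with the same $\bar f$ and some $\epsilon'=e^{-\Omega(N)}$.

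Consider any ``good'' period $m$, i.e.\ one satisfying $\int_0^1\|\psi(m+x)_S-\bar\psi_S(x)\|_1\,\mathrm dx\le\epsilon$. Markov's inequality applied to the nonnegative integrand gives that
\[
G_m:=\{x\in[0,1):\|\psi(m+x)_S-\bar\psi_S(x)\|_1\le\sqrt\epsilon\}
\]
has Lebesgue measure at least $1-\sqrt\epsilon$. On $G_m$ the trace-norm distance is $e^{-\Omega(N)}$, so the hypothesis on $f$ supplies the pointwise bound $|f(\psi(m+x)_S)-\bar f(x)|\le\eta$ with $\eta=e^{-\Omega(N)}$. On the complement (of measure $\le\sqrt\epsilon$) we use a crude a priori bound $|f(\rho_1)-f(\rho_2)|\le B$ valid uniformly on $\mathbb D$; for the functions of interest $B=e^{o(N)}$ (e.g.\ $B=O(L)=O(N)$ for the von Neumann entropy by $|S(\rho)|\le L\ln 2$). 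Combining,
\[
\int_0^1|f(\psi(m+x)_S)-\bar f(x)|^2\,\mathrm dx\le\eta^2+B^2\sqrt\epsilon=e^{-\Omega(N)}=:\epsilon'.
\]
Thus every good period of $\psi(t)_S$ is automatically a good period of $f(\psi(t)_S)$ at level $\epsilon'$, and since the fraction of good periods is already at least $1-\epsilon\ge1-\epsilon'$, approximate periodicity in the sense of Definition \ref{def:apf} follows.

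It remains to verify the hypothesis on $f$ in the von Neumann entropy case, which is supplied by the Fannes--Audenaert inequality \cite{Fan73, Aud07}:
\[
|S(\rho_1)-S(\rho_2)|\le T\ln(d_S-1)+h(T),\qquad T=\tfrac12\|\rho_1-\rho_2\|_1,\quad d_S=2^L.
\]
For $T=e^{-\Omega(N)}$ and $L=O(N)$, the first term is $O(N)\cdot e^{-\Omega(N)}=e^{-\Omega(N)}$ and the binary-entropy term is $O\bigl(T\ln(1/T)\bigr)=O(Ne^{-\Omega(N)})=e^{-\Omega(N)}$.

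The only (mild) obstacle is the mismatch between Definition \ref{def:apmf} (integrated trace norm) and Definition \ref{def:apf} (squared $L^2$): the trace-norm distance is controlled only on average, not pointwise, so it cannot be substituted directly into the hypothesis on $f$. The Markov dichotomy above converts the average bound into pointwise closeness on a subset of measure $1-\sqrt\epsilon$, at the price of a $B^2\sqrt\epsilon$ error on the exceptional set; because $B$ is at most quasi-polynomial in $N$ while $\sqrt\epsilon=e^{-\Omega(N)}$, this loss is absorbed by the ambient exponential and does not weaken the conclusion.
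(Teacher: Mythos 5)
Your argument is correct, and since the paper states Corollary \ref{c:r} without an explicit proof (it is presented as an immediate consequence of Theorem \ref{t:r}), your Markov truncation is precisely the missing bridge between Definition \ref{def:apmf} (integrated trace norm) and Definition \ref{def:apf} (squared $L^2$): every good period for $\psi(t)_S$ becomes a good period for $f(\psi(t)_S)$ at a still-exponentially-small level, and the fraction of good periods is inherited. One caveat worth making explicit: your bound $\eta^2+B^2\sqrt\epsilon$ on the exceptional set requires the a priori oscillation bound $B=e^{o(N)}$ on $f$ over all of $\mathbb D$, which is \emph{not} literally implied by the stated hypothesis on $f$ (that hypothesis only constrains pairs at exponentially small trace distance; chaining across $\mathbb D$ only yields $B=e^{O(N)}$, whose constant could in principle overwhelm the one in $\sqrt\epsilon$). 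You flag this and verify it for the von Neumann entropy via $|S(\rho)|\le L\ln 2=O(N)$, which covers the examples the corollary actually names; for the entropy one can even bypass the truncation entirely, since Fannes--Audenaert plus $\|\rho_1-\rho_2\|_1\le 2$ gives $|f-\bar f|^2\lesssim (L^2+1)\,\|\cdot\|_1$ pointwise, so the $L^2$ integral is controlled directly by the $L^1$ integral. So the proof is sound as applied to the intended examples, but strictly speaking it uses a mild implicit strengthening of the hypothesis on general $f$.
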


Theorem \ref{t:r} and Corollary \ref{c:r} answer Question \ref{q:4}. In almost every Floquet system initialized in a Haar-random product state, Conjecture \ref{conj} holds with exponential accuracy for the reduced density matrix of any subsystem up to a particular size and for any sufficiently continuous function of this reduced density matrix. It is an open question to what extent the condition $L\le0.29248N$ in Theorem \ref{t:r} and Corollary \ref{c:r} can be relaxed.

\section{Stability}

In physics, stability is a concept of significant practical importance. Physical phenomena that are unstable with respect to perturbations occur only in fine-tuned systems and thus may be difficult to observe in practice.

To formally define stability in (finite-size) quantum systems, consider an ensemble of models parameterized by a vector $\mathbf x\in\mathbb R^\kappa$ of $\kappa$ real numbers, where $\mathbb R^\kappa$ is called the parameter space. For example, Eq.~(\ref{eq:modelb}) is an ensemble of Floquet systems parameterized by $\mathbf x=(\mathbf h,\mathbf J)$ with $\kappa=3N-1$. We say that a property is stable at $\mathbf x_0\in\mathbb R^\kappa$ with perturbation strength $r$ if for any $\mathbf x\in\mathbb R^\kappa$ whose distance from $\mathbf x_0$ is $\le r$, the model represented by $\mathbf x$ has the property. The larger $r$ is, the more stable the property is. $r=0^+$ (infinitesimal perturbation) corresponds to stability in the weakest sense.

If models with a particular property occupy only a measure zero subset $X$ of the parameter space $\mathbb R^\kappa$, then the property is unstable at any $\mathbf x_0\in X$ with respect to infinitesimal perturbations. 

\begin{definition} [longest time scale] \label{def:lts}
In finite-size systems, a dynamic behavior occurs up to the longest time scale if it persists to strictly infinite time.
\end{definition}

Theorems \ref{t:b}, \ref{t:f} immediately imply

\begin{corollary} [Instability of DTC at longest times] \label{c:DTC}
In the Floquet system (\ref{eq:modelb}) of arbitrarily large but finite size $N$, DTC behavior at the longest time scales is unstable at any $(\mathbf h,\mathbf J)$ with respect to infinitesimal periodic perturbations of period $1$. The same holds for any ensemble of Floquet systems considered in Theorem \ref{t:f}.
\end{corollary}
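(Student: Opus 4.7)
The plan is to reduce Corollary \ref{c:DTC} to a direct application of Theorems \ref{t:b} and \ref{t:f} together with the elementary fact that a Lebesgue-measure-zero subset of $\mathbb R^\kappa$ has empty interior. First I would make the physics-to-math translation explicit: by Definition \ref{def:lts}, DTC behavior at the longest time scales means that for some observable $A$ (with $\|A\|=1$) and some integer $m>1$, the function $t\mapsto\langle\psi(t)|A|\psi(t)\rangle$ is approximately periodic with period $m$ but fails to be approximately periodic with period $1$, and this failure persists to strictly infinite time. Consequently, a system exhibiting DTC at the longest time scales cannot have $\langle\psi(t)|A|\psi(t)\rangle$ be $\epsilon$-periodic, in the sense of Definition \ref{def:apf}, for any $\epsilon<\epsilon_0$, where $\epsilon_0>0$ is a fixed constant quantifying the period-doubling contrast of the DTC signal.

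Next I would apply Theorem \ref{t:b} in the setting of model (\ref{eq:modelb}) (resp.\ Theorem \ref{t:f} in the setting of (\ref{eq:f})). The theorem says that outside a Lebesgue-measure-zero subset $X\subset\mathbb R^\kappa$ of parameter space, the expectation $\langle\psi(t)|A|\psi(t)\rangle$ is $e^{-\Omega(N)}$-periodic with probability $1-e^{-\Omega(N)}$ over the Haar-random product state. For $N$ sufficiently large, $e^{-\Omega(N)}<\epsilon_0$, so any parameter value for which DTC behavior survives to the longest time scales must lie in $X$, and hence the set of such ``DTC-at-infinite-time'' parameters has Lebesgue measure zero.

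Finally I would combine the above with the fact that every open ball in $\mathbb R^\kappa$ has positive Lebesgue measure and therefore cannot be contained in $X$. In particular, for every $(\mathbf h,\mathbf J)\in\mathbb R^\kappa$ and every $r>0$, the ball of radius $r$ around $(\mathbf h,\mathbf J)$ contains a parameter at which DTC behavior does not persist to infinite time. Since the parameters of models (\ref{eq:modelb}) and (\ref{eq:f}) are precisely the coefficients of period-$1$ Hamiltonians, any perturbation within parameter space is automatically a periodic perturbation of period $1$. The definition of stability given in Section 5, applied with $r=0^+$, then yields the claimed instability at every point.

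The only step requiring care is the first: one has to argue that the informal statement ``DTC persists to infinite time'' really does preclude $\epsilon$-periodicity for \emph{all} sufficiently small $\epsilon$, not merely for some. This amounts to the observation that a bona fide DTC signal has $\Omega(1)$ (in $N$) contrast between consecutive periods of the Hamiltonian, so choosing $\epsilon_0$ smaller than this contrast suffices. Once this translation is in place, the rest of the argument is essentially a one-liner and the word ``immediately'' in the statement of Corollary \ref{c:DTC} is justified.
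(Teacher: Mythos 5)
Your proposal is correct and follows essentially the same route as the paper: the paper's ``proof'' is exactly the observation (stated just before Definition \ref{def:lts}) that a property confined to a Lebesgue-measure-zero subset of parameter space is unstable under infinitesimal perturbations, combined with Theorems \ref{t:b} and \ref{t:f}, and with the Section~2 definition of DTC as a failure of approximate period-$1$ periodicity. Your extra care in fixing the contrast threshold $\epsilon_0$ and noting that parameter-space perturbations are automatically period-$1$ periodic only makes explicit what the paper leaves implicit.
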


Note that all numerical evidence and heuristic arguments in the literature \cite{vKS16, EBN16, KMS19, ZLM+23} for the stability of DTC are for short and intermediate times (up to time exponential in the system size). They do not contradict Corollary \ref{c:DTC}.

Since the heuristic arguments \cite{vKS16, EBN16, KMS19, ZLM+23} for the stability of DTC (up to intermediate times) require many-body localization (MBL) in Floquet systems, in the remainder of this section we compare the stability of DTC and that of MBL (Table \ref{t}). When discussing the stability of a dynamic property in a system of size $N$, we must specify the time scale and perturbation strength (denoted by $r$ in the second paragraph of this section) under consideration.

\begin{table}
\caption{Stability of MBL versus that of DTC with respect to perturbations that preserve the time-translation symmetry of the unperturbed Hamiltonian. The second row applies to MBL of both time-independent Hamiltonians and Floquet systems. $N$ is the system size.}
\label{t}
\centering
\begin{tabular}{c|c|c|c}
\hline
perturbation & infinitesimal or & $1/\poly(N)$ & extensive (Definition \ref{def:extp}) \\
strength & sufficiently small & & \\
\hline
MBL & stable at all time scales & Question \ref{q:5} & controversial (Question \ref{q:6})\\
\hline
DTC & unstable at the longest & stable up to & believed to be stable up to \\
& time scales (Corollary \ref{c:DTC}) & time $\ge\poly(N)$ & time $e^{\Omega(N)}$ (Conjecture \ref{c:2}) \\
\hline
\end{tabular}
\end{table}

\paragraph{MBL of time-independent Hamiltonians.}When speaking of the stability of MBL in systems governed by time-independent Hamiltonians, people usually consider the longest time scale: For arbitrarily large but finite $N$, upon perturbation does localization persist to infinite time?\footnote{Dynamics that is localized up to time exponential in $N$ but delocalized afterwards may be called quasi-many-body localization \cite{YLC+16}.} Suppose that the unperturbed Hamiltonian $H_\textnormal{unp}$ is extensive.

\begin{definition} [extensive Hamiltonian]
In a quantum lattice system with $N$ sites, a Hamiltonian is extensive if it is a sum of $N$ terms such that
\begin{itemize}
\item for each lattice site, there are $\Theta(1)$ terms whose support is contained in a constant-radius neighborhood of the site (the support of an operator is the set of sites it acts non-trivially on);
\item the operator norm of each term is $\Theta(1)$.
\end{itemize}
\end{definition}

\begin{definition} [non-degenerate spectrum and non-degenerate gap \cite{Per84, HBZ19, HH23}] \label{def:ndg}
The spectrum of a Hamiltonian is non-degenerate if all eigenvalues are distinct. A non-degenerate spectrum $\{e_j\}$ has non-degenerate gaps if for any $j\neq k$,
\begin{equation} \label{eq:ndg}
e_j-e_k=e_{j'}-e_{k'}\implies(j=j')~\textnormal{and}~(k=k').
\end{equation}
\end{definition}

Assume that the spectrum of $H_\textnormal{unp}$ is non-degenerate and has non-degenerate gaps. This assumption is reasonable because the spectra of all but a measure zero set of (geometrically) local Hamiltonians are such \cite{Hua21PP}. For many initial states (including but not limited to Haar-random product states \cite{HH19}), reduced density matrices of not too large subsystems become almost time independent at long times \cite{Tas98, Rei08, LPSW09, Sho11, SF12}. This process is called equilibration \cite{GE16, WGRE19}. The equilibration time is the time required to reach equilibrium. An upper bound, which we denote by $T$, on the equilibration time was derived \cite{SF12}. $T$ is a function of the energy spectrum and is finite (in finite-size systems) if the spectrum has non-degenerate gaps. After the equilibration time, properties of the system are described \cite{Rei08, LPSW09} by the so-called diagonal ensemble \cite{RDO08}, which is a mixed state obtained by dephasing the initial state in the energy eigenbasis.

For a sufficiently small perturbation $H_\textnormal{per}$, the eigenvalues, eigenstates and hence the diagonal ensemble and the upper bound $T$ on the equilibration time of the perturbed Hamiltonian $H_\textnormal{unp}+H_\textnormal{per}$ are almost the same as those of $H_\textnormal{unp}$, respectively. Thus, at time $t>T$, the effect of $H_\textnormal{per}$ on the dynamics is small. For $0\le t\le T$, the effect is upper bounded by
\begin{equation} \label{eq:sp}
\|e^{-i(H_\textnormal{unp}+H_\textnormal{per})t}-e^{-iH_\textnormal{unp}t}\|\le\|H_\textnormal{per}\|t\le\|H_\textnormal{per}\|T\ll1
\end{equation}
if the perturbation strength $\|H_\textnormal{per}\|$ is $\ll1/T$. If the dynamics generated by $H_\textnormal{unp}$ is localized at all times, so is that by $H_\textnormal{unp}+H_\textnormal{per}$. Thus, MBL is stable at all time scales with respect to sufficiently small perturbations.

For slightly larger perturbations, it is an open question that
\begin{question} \label{q:5}
Is MBL stable at all time scales with respect to perturbations of strength $1/\poly(N)$?
\end{question}

In my personal opinion, a convincing answer to this question is an important step towards understanding the stability of MBL.

Arguably, the most realistic perturbation strength is extensive.

\begin{definition} [extensive perturbation] \label{def:extp}
In a quantum lattice system with $N$ sites, a perturbation is extensive if it is a sum of $N$ terms such that
\begin{itemize}
\item for each lattice site, there are $\Theta(1)$ terms whose support is contained in a constant-radius neighborhood of the site;
\item the operator norm of each term is upper bounded by a sufficiently small positive constant.
\end{itemize}
\end{definition}

\begin{question} \label{q:6}
Is MBL stable at all time scales with respect to extensive perturbations?
\end{question}

This is the most important open question about MBL. However, there is much debate on it in the community with refreshed understanding \cite{MCK+22} emerged in the last few years. The most rigorous evidence for a positive answer is Imbrie's proof in a random spin chain \cite{Imb16}. The proof assumes limited level attraction, but it is very difficult to prove this assumption. There are numerical evidence \cite{SBPV20, Sel22} in one spatial dimension and heuristic arguments \cite{DH17, DI17} in two and higher dimensions for a negative answer to Question \ref{q:6}. In general, numerical methods are limited to relatively small system sizes. They suffer from significant finite-size effects and are thus inconclusive \cite{PSS+20, ABD+21}.

\paragraph{Floquet MBL.}For the stability of MBL in Floquet systems, people also usually consider the longest time scale. Floquet MBL is stable at all time scales with respect to sufficiently small perturbations that preserve the time-translation symmetry of the unperturbed Hamiltonian. This result is obtained by extending the stability proof around (\ref{eq:sp}) for time-independent Hamiltonians to Floquet systems. Technical lemmas necessary for this extension are given in Appendix \ref{app}. References \cite{RSS12, LDM14PRL, LDM14PRE} extended the diagonal ensemble to Floquet systems.

The analogues of Questions \ref{q:5}, \ref{q:6} for Floquet MBL are important open questions.

\paragraph{DTC.}Corollary \ref{c:DTC} says that DTC at the longest time scales (Definition \ref{def:lts}) is unstable with respect to infinitesimal perturbations that preserve the time-translation symmetry of the unperturbed Hamiltonian. Therefore, when speaking of the stability of DTC, people usually consider short and intermediate times.

\begin{lemma} [Lemma 1 in Ref.~\cite{HC15}]
Let $G_1(t),G_2(t)$ be two time-dependent Hamiltonians such that
\begin{equation}
\|G_1(t)-G_2(t)\|\le\epsilon,\quad\forall t\in\mathbb R.
\end{equation}
Their time-evolution operators satisfy
\begin{equation}
\left\|\mathcal Te^{-i\int_0^tG_1(\tau)\,\mathrm d\tau}-\mathcal Te^{-i\int_0^tG_2(\tau)\,\mathrm d\tau}\right\|\le\epsilon t,\quad\forall t\ge0.
\end{equation}
\end{lemma}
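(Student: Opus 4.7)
The plan is to reduce the comparison of the two time-ordered exponentials to an integral of their generator difference, using the Schr\"odinger-type equation satisfied by each. Write $U_i(t):=\mathcal Te^{-i\int_0^tG_i(\tau)\,\mathrm d\tau}$ for $i\in\{1,2\}$, so that $U_i(0)=I$ and $i\partial_t U_i(t)=G_i(t)U_i(t)$. The cleanest comparison is not $U_2-U_1$ directly but the auxiliary unitary $V(t):=U_1^\dagger(t)U_2(t)$, which measures how far $U_2$ has drifted from $U_1$ in a gauge that removes the common part of the evolution.

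Next, I would differentiate $V(t)$. Using $\partial_t U_1^\dagger(t)=iU_1^\dagger(t)G_1(t)$ and $\partial_t U_2(t)=-iG_2(t)U_2(t)$, the $G_1$ and $G_2$ pieces combine into a single commutator-free expression
\begin{equation}
\partial_t V(t)=-iU_1^\dagger(t)\bigl(G_2(t)-G_1(t)\bigr)U_2(t).
\end{equation}
Integrating from $0$ to $t$ with $V(0)=I$ yields
\begin{equation}
V(t)-I=-i\int_0^tU_1^\dagger(\tau)\bigl(G_2(\tau)-G_1(\tau)\bigr)U_2(\tau)\,\mathrm d\tau.
\end{equation}
Taking the operator norm, using the triangle inequality for integrals, the submultiplicativity of $\|\cdot\|$, and the unitarity of $U_1(\tau),U_2(\tau)$, the factors of $U_1^\dagger$ and $U_2$ drop out and the hypothesis $\|G_1(\tau)-G_2(\tau)\|\le\epsilon$ gives $\|V(t)-I\|\le\epsilon t$. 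Finally, since $U_2(t)-U_1(t)=U_1(t)(V(t)-I)$ and $U_1(t)$ is unitary, the desired bound $\|U_2(t)-U_1(t)\|\le\epsilon t$ follows.

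There is no serious obstacle: the only subtlety is handling the time-ordering symbol correctly, which is why I would work with the differential equation $i\partial_tU_i=G_iU_i$ rather than with the series expansion of $\mathcal Te^{-i\int G_i}$. If one instead tried to bound $U_2-U_1$ directly via $\partial_t(U_2-U_1)$, one would pick up a term $(G_2-G_1)U_2$ together with a term $G_1(U_2-U_1)$, leading to a Gr\"onwall-type argument; this still works but is slightly less clean, because it superficially produces an exponential factor that must then be cancelled using unitarity. The $V(t)$ trick above avoids this detour entirely.
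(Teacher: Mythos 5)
Your proof is correct. The paper does not prove this lemma itself---it imports it verbatim as Lemma~1 of Ref.~\cite{HC15}---so there is no in-paper argument to compare against, but your interaction-picture computation (differentiating $V(t)=U_1^\dagger(t)U_2(t)$, integrating the Duhamel identity, and using unitary invariance of the operator norm) is the standard and complete proof of this bound; the only implicit assumption is enough regularity of $G_1,G_2$ for the Schr\"odinger equations $i\partial_tU_i=G_iU_i$ to hold, which is automatic for the piecewise time-independent Hamiltonians considered in this paper.
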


This lemma implies that DTC is stable up to time $\ge\poly(N)$ with respect to perturbations of strength $1/\poly(N)$.

We say that a time-dependent perturbation $H_\textnormal{per}(t)$ is extensive if $H_\textnormal{per}(t)$ for any $t\in\mathbb R$ is extensive in the sense of Definition \ref{def:extp}.

\begin{conjecture} [\cite{vKS16, EBN16, KMS19, ZLM+23}] \label{c:2}
In the Floquet system (\ref{eq:modelb}), there is $(\mathbf h,\mathbf J)\in[-20,20]^{\times(3N-1)}$ at which DTC behavior is stable up to time $e^{\Omega(N)}$ with respect to extensive periodic perturbations of period $1$.
\end{conjecture}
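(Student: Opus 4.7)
The plan is to exhibit a specific parameter point $(\mathbf h_0,\mathbf J_0)\in[-20,20]^{\times(3N-1)}$ and establish stability of DTC behavior in its vicinity. A natural candidate lies deep in the DTC phase: take $h_l^x=\pi$ for every $l$, and draw $h_l^z, J_l$ independently from a bounded distribution with sufficiently large width $W$ so that $H'_1$ is in the strong-disorder MBL regime. At this point the Floquet operator factorizes as $U_F=(-i)^N X\exp(-iH'_1/2)$ with $X:=\prod_l\sigma_l^x$, and its eigenstates pair up at quasi-energies differing by $\pi$ (since $X$ anticommutes with each $\sigma_l^z$); this $\mathbb Z_2$ pairing is the exact signature of perfect DTC. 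Any periodic extensive perturbation $V(t)$ of period $1$ decomposes into channels that commute or anticommute with $X$: the former renormalize $H'_1$ while preserving the pairing structure, and the latter constitute the genuine symmetry-breaking channel whose effect must be controlled.

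The core technical step is to construct a dressing unitary $W$ and quasi-local pseudo-spins $\{\tau_l^z\}$ for the perturbed Floquet operator $U_F^{\mathrm{per}}$ such that $U_F^{\mathrm{per}}\tau_l^z(U_F^{\mathrm{per}})^\dagger=-\tau_l^z$ up to operators of norm $e^{-\Omega(N)}$. This is the Floquet analogue of the local-integrals-of-motion picture, and I would construct it iteratively by a Schrieffer--Wolff/KAM-type scheme modelled on Imbrie's work \cite{Imb16}: starting from the unperturbed LIOMs $\sigma_l^z$, one performs quasi-local unitary rotations order by order in the perturbation to cancel off-diagonal terms. Convergence requires a resonance-avoidance bound on small denominators, which here are differences of the quasi-energies of $U_F$ taken modulo $2\pi$; the strong-disorder assumption on $H'_1$ together with the non-degenerate-gap property (generic by Theorem \ref{t:f}) is designed to exclude dangerous resonances with probability $1-e^{-\Omega(N)}$.

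Once these approximate LIOMs are in hand, for any local observable $A$ one expands $A$ in the dressed operator basis and computes $\langle\psi(t)|A|\psi(t)\rangle$. The $\pi$-pairing forces the off-diagonal matrix elements to oscillate with quasi-energies $\pi+e^{-\Omega(N)}$, so the period-$2$ component of the signal persists with drift accumulating at rate $e^{-\Omega(N)}$ per period. This yields DTC persistence up to time $e^{\Omega(N)}$, which is the statement of the conjecture.

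The principal obstacle is the rigorous proof of Floquet MBL under extensive (rather than merely $1/\poly(N)$) perturbations. Even in the time-independent setting this is the content of Question \ref{q:6}, for which only Imbrie's proof under a difficult-to-verify limited-level-attraction hypothesis is known \cite{Imb16}, while recent numerical work \cite{SBPV20, Sel22, MCK+22} and two-dimensional heuristic arguments \cite{DH17, DI17} have raised serious doubt about MBL's asymptotic stability in the thermodynamic limit. Lifting these techniques to the Floquet setting introduces the further complication that eigenvalue differences live on the circle and thus generate additional mod-$2\pi$ resonances between time-ordered factors in $U_F^{\mathrm{per}}$. Consequently, without an analogue of Imbrie's hypothesis or a restriction to a perturbation whose per-term norm shrinks slowly with $N$, Conjecture \ref{c:2} must be regarded as genuinely open; any complete proof would need to settle, or at minimum substantially advance, the Floquet analogue of Question \ref{q:6}.
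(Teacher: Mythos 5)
This statement is a \emph{conjecture} in the paper, not a theorem: the author explicitly does not prove it, and writes that proving or disproving it would be ``a remarkable achievement.'' So there is no proof in the paper to compare against, and your proposal should be judged on whether it closes the gap on its own. It does not, and to your credit you say so in your final paragraph. The outline you give ($h_l^x=\pi$, the $\pi$-pairing of Floquet eigenstates under $X=\prod_l\sigma_l^x$, dressed local integrals of motion, exponentially slow drift of the subharmonic response) is exactly the heuristic picture from the cited literature \cite{vKS16, EBN16, KMS19, ZLM+23}, and the ``core technical step'' you identify --- convergence of a Schrieffer--Wolff/KAM scheme producing quasi-local $\tau_l^z$ with $U_F^{\mathrm{per}}\tau_l^z(U_F^{\mathrm{per}})^\dagger=-\tau_l^z+e^{-\Omega(N)}$ under an \emph{extensive} perturbation --- is precisely the open problem (the Floquet analogue of Question \ref{q:6}). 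Asserting that this step can be ``modelled on Imbrie's work'' is not a proof: Imbrie's argument itself rests on an unproven limited-level-attraction hypothesis, and its extension to Floquet unitaries is not available.

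One concrete error beyond the acknowledged openness: you invoke ``the non-degenerate-gap property (generic by Theorem \ref{t:f})'' to exclude dangerous resonances. First, the genericity of non-degenerate quasienergy gaps is Lemma \ref{l:ndg}, not Theorem \ref{t:f}. Second, and more importantly, that result is a measure-zero statement: for almost every $(\mathbf h,\mathbf J)$ the quasienergy differences are pairwise \emph{distinct}, but this gives no quantitative lower bound on how small the denominators $E_j-E_k-E_{j'}+E_{k'}\pmod{2\pi}$ can be. A KAM-type iteration needs Diophantine-style lower bounds on small denominators with controlled probability, which is a far stronger input than ``distinct with probability one'' and is exactly where all known attempts stall. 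So the resonance-avoidance step of your scheme is not supported by anything in the paper or in your argument, and the proposal has a genuine, unfilled gap: the conjecture remains open.
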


Proving or disproving this conjecture is a remarkable achievement in understanding the stability of DTC at short and intermediate times.

\section*{Data availability statement}

No new data were created or analysed in this study.

\section*{Acknowledgments}

I would like to thank Jack Kemp and especially Norman Y. Yao for their suggestions which helped to improve the presentation of the paper. This work was supported by the NSF QLCI program (Grant No.~OMA-2016245).

\appendix

\section{Proofs} \label{app}

Consider a system of $N$ spins. Let $d_\textnormal{loc}$ be the local Hilbert space dimension of each spin. Since the Floquet operator (\ref{eq:flqo}) is unitary, it can be decomposed as
\begin{equation}
U_F=\sum_{j=1}^s\lambda_j\Pi_j,
\end{equation}
where $\lambda_1,\lambda_2,\ldots,\lambda_s$ are the distinct eigenvalues of $U_F$, and $\Pi_j=\Pi_j^2$ is the projector onto the eigenspace corresponding to $\lambda_j$. Let $d_j=\tr\Pi_j$ be the multiplicity of the eigenvalue $\lambda_j$ so that
\begin{equation}
\sum_{j=1}^sd_j=d_\textnormal{loc}^N.
\end{equation}

Let
\begin{equation}
D_1:=d_\textnormal{loc}^{-N}\sum_{j=1}^sd_j^2,\quad D_2:=\max_{w\in\mathbb C\setminus\{1\}}|\{(j,k):w=\lambda_k/\lambda_j\}|.
\end{equation}
It is easy to see that
\begin{equation}
1\le D_1\le d_\textnormal{loc}^N,\quad1\le D_2\le s.
\end{equation}
$D_1$ is a measure of how degenerate the spectrum of $U_F$ is. $D_1=1$ if and only if the spectrum is non-degenerate, i.e., $d_j=1$ for all $j$. Furthermore,
\begin{equation}
D_1\le\max_{j\in\{1,2,\ldots,s\}}d_j.
\end{equation}
Note that this bound can be very loose. For example, $D_1=O(1)$ if
\begin{equation}
d_j=
\begin{cases}
O(d_\textnormal{loc}^{N/3}),&j\le d_\textnormal{loc}^{N/3}\\
O(1),&j>d_\textnormal{loc}^{N/3}
\end{cases},
\end{equation}
where the multiplicities of an exponential number of eigenvalues are exponential in $N$.

The initial state can be expanded as
\begin{equation}
|\psi(0)\rangle=\sum_{j=1}^sc_j|j\rangle,\quad c_j:=\|\Pi_j|\psi(0)\rangle\|\ge0,\quad c_j|j\rangle:=\Pi_j|\psi(0)\rangle,\quad\sum_{j=1}^sc_j^2=1.
\end{equation}
The effective dimension of $|\psi(0)\rangle$ is defined as
\begin{equation}
D_\textnormal{eff}=1\big/\sum_{j=1}^sc_j^4.
\end{equation}
When $t$ is an integer,
\begin{equation} \label{eq:int}
|\psi(t)\rangle=\sum_{j=1}^sc_jU_F^t|j\rangle=\sum_{j=1}^sc_j\lambda_j^t|j\rangle.
\end{equation}

The following lemma is a generalization of previous results \cite{Tas98, Rei08, LPSW09, Sho11, SF12} on the equilibration of systems governed by a time-independent Hamiltonian to Floquet systems.

\begin{lemma} \label{l:eq}
For any observable $A$ with $\|A\|=1$, $\langle\psi(t)|A|\psi(t)\rangle$ is $\sqrt{D_2/D_\textnormal{eff}}$-periodic.
\end{lemma}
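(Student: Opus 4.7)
The plan is to extend the equilibration bounds of \cite{Tas98, Rei08, LPSW09, Sho11, SF12} from time-independent Hamiltonians to the Floquet setting by carrying along the intra-period Heisenberg evolution $A(x) := U(0,x)^\dagger A U(0,x)$. The natural candidate for the periodic reference function is the Floquet analogue of the diagonal-ensemble expectation value, $\bar f(x) := \sum_{j=1}^s c_j^2 \langle j|A(x)|j\rangle$, which is automatically a function on $[0,1)$.

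Starting from $|\psi(m+x)\rangle = U(0,x) U_F^m |\psi(0)\rangle$ and applying (\ref{eq:int}), a direct expansion in the Floquet eigenbasis gives $f(m+x) - \bar f(x) = \sum_{j\ne k} c_j c_k (\lambda_k/\lambda_j)^m \langle j|A(x)|k\rangle$. Squaring this real-valued expression and averaging over $m\in\{0,1,\ldots,M-1\}$, the phase factors $(\lambda_k\lambda_{k'}/(\lambda_j\lambda_{j'}))^m$ sum to $M$ when $\lambda_k/\lambda_j = \lambda_{j'}/\lambda_{k'}$ and to $O(1)$ otherwise, so $\lim_{M\to\infty}\frac1M\sum_{m=0}^{M-1}|f(m+x)-\bar f(x)|^2 = \sum_{w\ne 1}|G_w(x)|^2$, where $G_w(x) := \sum_{(j,k):\,j\ne k,\,\lambda_k/\lambda_j = w} c_j c_k \langle j|A(x)|k\rangle$.

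Each $G_w(x)$ is a sum of at most $D_2$ terms by the definition of $D_2$, so Cauchy--Schwarz yields $|G_w(x)|^2 \le D_2 \sum_{(j,k):\,\lambda_k/\lambda_j=w} c_j^2 c_k^2 |\langle j|A(x)|k\rangle|^2$. Summing over $w\ne 1$ and integrating over $x\in[0,1)$ gives $\int_0^1 \sum_{w\ne 1}|G_w(x)|^2\,\mathrm dx \le D_2 \int_0^1 \tr\big(A(x)\rho A(x)\rho\big)\,\mathrm dx$, where $\rho := \sum_j c_j^2 |j\rangle\langle j|$ is the diagonal restriction of the initial state to the Floquet eigenbasis. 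The Schatten--H\"older inequality $\tr(A(x)\rho A(x)\rho) \le \|A(x)\|^2 \tr(\rho^2) = 1/D_\textnormal{eff}$, together with the trivial commutation of the finite sum over $m$ with the $x$-integral, produces $\lim_{M\to\infty}\frac1M\sum_{m=0}^{M-1}\int_0^1|f(m+x)-\bar f(x)|^2\,\mathrm dx \le D_2/D_\textnormal{eff}$.

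Setting $g(m) := \int_0^1 |f(m+x) - \bar f(x)|^2\,\mathrm dx$, Markov's inequality applied to the time average of $g$ gives $\limsup_{M\to\infty}\frac1M|\{m < M : g(m) > \epsilon\}| \le (D_2/D_\textnormal{eff})/\epsilon$, so the choice $\epsilon := \sqrt{D_2/D_\textnormal{eff}}$ matches both the per-period squared-error threshold and the tolerated fraction of bad periods to $\sqrt{D_2/D_\textnormal{eff}}$, which is exactly what Definition \ref{def:apf} requires. I expect the main bookkeeping subtlety to be isolating the correct resonance condition on ordered eigenvalue pairs and keeping track of Hermitian conjugations; once the reference $\bar f(x)$ is pinned down, the rest is a sequence of Cauchy--Schwarz, Schatten-norm, and Markov estimates of the same shape as in the time-independent case.
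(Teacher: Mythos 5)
Your proposal is correct and follows essentially the same route as the paper: the same Floquet diagonal-ensemble reference function $\bar f(x)=\tr(\psi_\textnormal{diag}\,U(0,x)^\dagger AU(0,x))$, the same resonance counting via $D_2$ after Ces\`aro averaging over $m$, the same Cauchy--Schwarz bound giving $D_2/D_\textnormal{eff}$ (the paper does it index-by-index with an AM--GM step, you package it as $\tr(A(x)\rho A(x)\rho)\le\tr(\rho^2)$, which is equivalent), and the same Markov step with $\epsilon=\sqrt{D_2/D_\textnormal{eff}}$.
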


\begin{proof}
Let
\begin{equation}
\bar A(x):=\lim_{\mathbb N\ni M\to\infty}\frac1M\sum_{m=0}^{M-1}\langle\psi(m+x)|A|\psi(m+x)\rangle,\quad x\in[0,1).
\end{equation}
This function plays the role of $\bar f$ in Definition \ref{def:apf}. Let
\begin{equation}
\Delta A(x):=\lim_{\mathbb N\ni M\to\infty}\frac1M\sum_{m=0}^{M-1}\big|\langle\psi(m+x)|A|\psi(m+x)\rangle-\bar A(x)\big|^2,\quad x\in[0,1).
\end{equation}

Since $\lambda_1,\lambda_2,\ldots,\lambda_s$ are pairwise distinct complex numbers of absolute value $1$,
\begin{equation}
\lim_{\mathbb N\ni M\to\infty}\frac1M\sum_{m=0}^{M-1}\frac{\lambda_k^m}{\lambda_j^m}=\delta_{jk},
\end{equation}
where $\delta$ is the Kronecker delta. Let $A_{jk}:=\langle j|A|k\rangle$. Using Eq. (\ref{eq:int}),
\begin{equation}
\bar A(0)=\lim_{\mathbb N\ni M\to\infty}\frac1M\sum_{m=0}^{M-1}\sum_{j,k=1}^sc_jc_kA_{jk}\frac{\lambda_k^m}{\lambda_j^m}=\sum_{j=1}^sc_j^2A_{jj}=\tr(\psi_\textnormal{diag}A),
\end{equation}
where $\psi_\textnormal{diag}:=\sum_{j=1}^sc_j^2|j\rangle\langle j|$ is the Floquet analogue \cite{LDM14PRL, LDM14PRE} of the diagonal ensemble \cite{RDO08}. Since $(AA^\dag)_{jj}\le\|A\|^2=1$,
\begin{align}
&\Delta A(0)=\lim_{\mathbb N\ni M\to\infty}\frac1M\sum_{m=0}^{M-1}\left|\sum_{j\neq k}c_jc_kA_{jk}\frac{\lambda_k^m}{\lambda_j^m}\right|^2=\sum_{j\neq k;j'\neq k'}c_jc_kc_{j'}c_{k'}A_{jk}A_{j'k'}^*\lim_{\mathbb N\ni M\to\infty}\frac1M\sum_{t=0}^{M-1}\frac{\lambda_k^m\lambda_{j'}^m}{\lambda_j^m\lambda_{k'}^m}\nonumber\\
&=\sum_{j\neq k;j'\neq k'}c_jc_kc_{j'}c_{k'}A_{jk}A_{j'k'}^*\delta_{\frac{\lambda_k}{\lambda_j},\frac{\lambda_{k'}}{\lambda_{j'}}}\le\sum_{j\neq k;j'\neq k'}\frac{c_j^2c_k^2|A_{jk}|^2+c_{j'}^2c_{k'}^2|A_{j'k'}|^2}2\delta_{\frac{\lambda_k}{\lambda_j},\frac{\lambda_{k'}}{\lambda_{j'}}}\nonumber\\
&=\sum_{j\neq k}c_j^2c_k^2|A_{jk}|^2\sum_{j'\neq k'}\delta_{\frac{\lambda_k}{\lambda_j},\frac{\lambda_{k'}}{\lambda_{j'}}}\le D_2\sum_{j\neq k}c_j^2c_k^2|A_{jk}|^2\le D_2\sqrt{\sum_{j\neq k}c_j^4A_{jk}(A^\dag)_{kj}\times\sum_{j\neq k}c_k^4(A^\dag)_{kj}A_{jk}}\nonumber\\
&\le D_2\sqrt{\sum_{j=1}^sc_j^4(AA^\dag)_{jj}\times\sum_{k=1}^sc_k^4(A^\dag A)_{kk}}\le D_2\sum_{j=1}^sc_j^4=D_2/D_\textnormal{eff}.
\end{align}

Replacing $A$ in the above calculation by $B:=(U(0,x))^\dag AU(0,x)$, we obtain
\begin{equation} 
\Delta A(x)=\Delta B(0)\le D_2/D_\textnormal{eff},\quad\forall x\in[0,1)
\end{equation}
so that
\begin{equation} \label{eq:lop}
\lim_{\mathbb N\ni M\to\infty}\frac1M\sum_{m=0}^{M-1}\int_0^1\big|\langle\psi(m+x)|A|\psi(m+x)\rangle-\bar A(x)\big|^2\,\mathrm dx=\int_0^1\Delta A(x)\,\mathrm dx\le D_2/D_\textnormal{eff}.
\end{equation}
Finally, Markov's inequality implies that $\langle\psi(t)|A|\psi(t)\rangle$ is $\sqrt{D_2/D_\textnormal{eff}}$-periodic.
\end{proof}

The following lemma is a straightforward generalization of Lemma 5 in Ref.~\cite{HH19}.

\begin{lemma} \label{l:eff}
For a Haar-random product state $|\phi\rangle$ (Definition \ref{def:haar}),
\begin{equation}
\e_{|\phi\rangle}\sum_{j=1}^s(\langle\phi|\Pi_j|\phi\rangle)^2\le\frac{D_12^N}{(d_\textnormal{loc}+1)^N}.
\end{equation}
\end{lemma}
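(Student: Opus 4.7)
The plan is to reduce the expectation to a trace against the two-copy second-moment tensor of the Haar-random product-state ensemble, and then expand that tensor as a sum of partial swaps indexed by subsets of the $N$ sites. First, I would rewrite the left-hand side as
\begin{equation*}
\e_{|\phi\rangle}\sum_{j=1}^s\langle\phi|\Pi_j|\phi\rangle^2=\sum_{j=1}^s\tr\bigl((\Pi_j\otimes\Pi_j)\,\e_{|\phi\rangle}|\phi\rangle\langle\phi|^{\otimes 2}\bigr),
\end{equation*}
group the two copies of each site together, and apply the standard single-qudit Haar second moment $\e_{|\phi_l\rangle}|\phi_l\rangle\langle\phi_l|^{\otimes 2}=(I_l+S_l)/(d_\textnormal{loc}(d_\textnormal{loc}+1))$, where $S_l$ denotes the swap between the two copies of site $l$. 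This yields
\begin{equation*}
\e_{|\phi\rangle}|\phi\rangle\langle\phi|^{\otimes 2}=\frac{1}{(d_\textnormal{loc}(d_\textnormal{loc}+1))^N}\sum_{T\subseteq\{1,2,\ldots,N\}}F_T,
\end{equation*}
where $F_T$ is the partial swap that exchanges the two copies of site $l$ for every $l\in T$ and acts as the identity on sites in $\bar T$.

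The key technical step is a partial-swap identity of the form
\begin{equation*}
\tr\bigl((\Pi_j\otimes\Pi_j)\,F_T\bigr)=\tr\bigl((\tr_{\bar T}\Pi_j)^2\bigr),
\end{equation*}
which reduces to the textbook $\tr((A\otimes B)F)=\tr(AB)$ when $T=\{1,2,\ldots,N\}$ and gives $(\tr\Pi_j)^2=d_j^2$ when $T=\emptyset$. Since $\tr_{\bar T}\Pi_j$ is positive semi-definite with trace $d_j$, the elementary PSD inequality $\tr(X^2)\le(\tr X)^2$ bounds $\tr((\tr_{\bar T}\Pi_j)^2)\le d_j^2$. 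Summing over the $2^N$ subsets $T$ and over $j$, and using $\sum_jd_j^2=D_1d_\textnormal{loc}^N$, yields
\begin{equation*}
\e_{|\phi\rangle}\sum_{j=1}^s\langle\phi|\Pi_j|\phi\rangle^2\le\frac{2^N\cdot D_1d_\textnormal{loc}^N}{(d_\textnormal{loc}(d_\textnormal{loc}+1))^N}=\frac{D_12^N}{(d_\textnormal{loc}+1)^N},
\end{equation*}
which is the claim.

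I expect the main obstacle to be a clean proof of the partial-swap identity, which requires careful index bookkeeping to track how the two copies of the $T$-part and the $\bar T$-part pair up under $F_T$, rather than invoking the familiar full-swap shortcut directly. Once that identity is established, the remainder is a direct computation combining single-site Haar integration with the PSD trace inequality, and no delicate estimates arise.
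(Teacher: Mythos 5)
Your proposal is correct, and its skeleton is the same as the paper's: rewrite the expectation as $\tr\bigl(\Pi_j^{\otimes2}\,\e_{|\phi\rangle}(|\phi\rangle\langle\phi|)^{\otimes2}\bigr)$ and insert the single-site Haar second moment, which factorizes over the $N$ sites. The only divergence is in the final estimate. The paper writes the second moment as $\frac{2\Pi_\textnormal{sym}}{d_\textnormal{loc}^2+d_\textnormal{loc}}$ per site and bounds $\tr\bigl(\Pi_j^{\otimes2}(2\Pi_\textnormal{sym})^{\otimes N}\bigr)\le2^N\tr(\Pi_j^{\otimes2})=2^Nd_j^2$ in one stroke, using the operator inequality $\Pi_\textnormal{sym}^{\otimes N}\le I$ against the positive semi-definite operator $\Pi_j^{\otimes2}$. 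You instead expand $(I+S)^{\otimes N}$ into $2^N$ partial swaps $F_T$ and bound each term via $\tr(\Pi_j^{\otimes2}F_T)=\tr\bigl((\tr_{\bar T}\Pi_j)^2\bigr)\le(\tr\Pi_j)^2=d_j^2$. Both give the same count of $2^N$ terms each bounded by $d_j^2$, so the results coincide; but the partial-swap identity you flag as the main obstacle is entirely avoidable, since $\Pi_\textnormal{sym}\le I$ sitewise does the whole job without any index bookkeeping. Your route does have the minor virtue of making explicit which correlations between the two copies contribute (the $T=\emptyset$ term alone already saturates $d_j^2$), but for this lemma the operator-inequality shortcut is the cleaner path.
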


\begin{proof}
Recall that $|\phi\rangle=\bigotimes_{l=1}^N|\phi_l\rangle$, where each $|\phi_l\rangle$ is an independent Haar-random state.
\begin{lemma} [\cite{Har13}]
\begin{equation}
\e_{|\phi_l\rangle}\big((|\phi_l\rangle\langle\phi_l|)^{\otimes2}\big)=\frac{2\Pi_\textnormal{sym}}{d_\textnormal{loc}^2+d_\textnormal{loc}},
\end{equation}
where $\Pi_\textnormal{sym}=\Pi_\textnormal{sym}^2$ is the projector onto the symmetric subspace of two qudits.
\end{lemma}

Using this lemma,
\begin{multline}
\e_{|\phi\rangle}(\langle\phi|\Pi_j|\phi\rangle)^2=\e_{|\phi\rangle}\tr\big(\Pi_j^{\otimes2}(|\phi\rangle\langle\phi|)^{\otimes2}\big)=\tr\left(\Pi_j^{\otimes2}\e_{|\phi\rangle}\big((|\phi\rangle\langle\phi|)^{\otimes2}\big)\right)\\
=\tr\left(\Pi_j^{\otimes2}\bigotimes_{l=1}^N\e_{|\phi_l\rangle}\big((|\phi_l\rangle\langle\phi_l|)^{\otimes2}\big)\right)=\tr\left(\Pi_j^{\otimes2}\left(\frac{2\Pi_\textnormal{sym}}{d_\textnormal{loc}^2+d_\textnormal{loc}}\right)^{\otimes N}\right)\le\frac{d_j^22^N}{(d_\textnormal{loc}^2+d_\textnormal{loc})^N}.
\end{multline}
We complete the proof of Lemma \ref{l:eff} by summing over $j$.
\end{proof}

If $D_1$ is not too large, Markov's inequality implies that the effective dimension $D_\textnormal{eff}$ of a Haar-random initial product state $|\psi(0)\rangle$ is exponential in $N$ with overwhelming probability.

\begin{corollary} \label{c:t2}
If
\begin{equation}
D_1=e^{o(N)},
\end{equation}
then
\begin{equation}
\Pr_{|\psi(0)\rangle}\left(D_\textnormal{eff}>\left(\frac{d_\textnormal{loc}+1}2-10^{-6}\right)^N\right)=1-e^{-\Omega(N)}.
\end{equation}
If
\begin{equation}
D_1\le\left(\frac{d_\textnormal{loc}+1}2-10^{-6}\right)^N,
\end{equation}
then
\begin{equation}
\Pr_{|\psi(0)\rangle}(D_\textnormal{eff}=e^{\Omega(N)})=1-e^{-\Omega(N)}.
\end{equation}
\end{corollary}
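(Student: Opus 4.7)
The plan is to apply Markov's inequality directly to the random variable $1/D_\textnormal{eff} = \sum_{j=1}^s c_j^4 = \sum_{j=1}^s (\langle\psi(0)|\Pi_j|\psi(0)\rangle)^2$, whose expectation is controlled by Lemma \ref{l:eff}. That lemma already gives
\begin{equation*}
\e_{|\psi(0)\rangle}\bigl(1/D_\textnormal{eff}\bigr)\le D_1\left(\frac{2}{d_\textnormal{loc}+1}\right)^N,
\end{equation*}
so everything reduces to choosing the right tail threshold and turning the factor $D_1$ into a loss of at most $e^{o(N)}$ or a constant factor deficit in the base of the exponential.

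For the first implication, assume $D_1=e^{o(N)}$ and set the threshold $\tau:=((d_\textnormal{loc}+1)/2-10^{-6})^{-N}$. Markov's inequality gives
\begin{equation*}
\Pr_{|\psi(0)\rangle}\bigl(D_\textnormal{eff}\le 1/\tau\bigr)=\Pr_{|\psi(0)\rangle}\bigl(1/D_\textnormal{eff}\ge\tau\bigr)\le\frac{\e(1/D_\textnormal{eff})}{\tau}\le e^{o(N)}\left(1-\frac{2\cdot 10^{-6}}{d_\textnormal{loc}+1}\right)^N,
\end{equation*}
and since the $e^{o(N)}$ prefactor cannot overcome the geometric decay with base strictly less than $1$, the right-hand side is $e^{-\Omega(N)}$. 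This yields the first conclusion.

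For the second implication, the assumed bound $D_1\le((d_\textnormal{loc}+1)/2-10^{-6})^N$ already forces $\e(1/D_\textnormal{eff})\le(1-2\cdot 10^{-6}/(d_\textnormal{loc}+1))^N=e^{-cN}$ for some constant $c>0$. Pick any constant $c'\in(0,c)$ and apply Markov's inequality with threshold $e^{-c'N}$:
\begin{equation*}
\Pr_{|\psi(0)\rangle}\bigl(D_\textnormal{eff}<e^{c'N}\bigr)=\Pr_{|\psi(0)\rangle}\bigl(1/D_\textnormal{eff}>e^{-c'N}\bigr)\le e^{-(c-c')N}=e^{-\Omega(N)}.
\end{equation*}
Hence $D_\textnormal{eff}\ge e^{c'N}=e^{\Omega(N)}$ with probability $1-e^{-\Omega(N)}$, giving the second conclusion.

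There is no real obstacle here beyond bookkeeping: Lemma \ref{l:eff} does the heavy lifting, and everything else is a two-line Markov argument. The only point requiring a bit of care is that in the first case one must verify the subexponential prefactor $e^{o(N)}$ really loses to the constant-ratio exponential decay $(1-2\cdot 10^{-6}/(d_\textnormal{loc}+1))^N$; the explicit constant $10^{-6}$ is what builds in this safety margin, so that both implications can be stated uniformly and the choice of $c'$ in the second case is unconstrained within $(0,c)$.
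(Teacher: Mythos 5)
Your proposal is correct and is exactly the paper's intended argument: the paper proves Corollary \ref{c:t2} by the one-line remark that Markov's inequality applied to $1/D_\textnormal{eff}=\sum_j c_j^4$, whose expectation Lemma \ref{l:eff} bounds by $D_1\bigl(2/(d_\textnormal{loc}+1)\bigr)^N$, gives both tail bounds. Your bookkeeping (the identity $\frac{2}{d_\textnormal{loc}+1}\cdot\bigl(\frac{d_\textnormal{loc}+1}{2}-10^{-6}\bigr)=1-\frac{2\cdot10^{-6}}{d_\textnormal{loc}+1}$ and the choice of $c'\in(0,c)$) is accurate.
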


We now study the spectral properties of Floquet systems. Intuitively, we expect that the set of Floquet systems with $D_1D_2>1$ has measure zero in the space of all Floquet systems. This statement is not mathematically precise and cannot be proved before ``measure zero'' is defined in the space.

We prove that in each ensemble of piecewise time-independent Floquet systems considered in Theorems \ref{t:b}, \ref{t:f}, the condition $D_1D_2=1$ is satisfied almost everywhere.

\begin{lemma} \label{l:ndg}
Consider the Floquet system (\ref{eq:f}). For any tuple $(n,T,\alpha,\gamma)$ such that $\alpha_1^x=\alpha_1^z=\gamma_1^{zz}=1$, the set
\begin{equation}
\{(\mathbf h,\mathbf J)\in\mathbb R^{12nN-9n}:D_1D_2>1\}
\end{equation}
has Lebesgue measure zero.
\end{lemma}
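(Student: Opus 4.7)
My plan is to contain the bad set $\{(\mathbf h,\mathbf J):D_1D_2>1\}$ in the zero set of a real-analytic, not identically zero function on $\mathbb R^{12nN-9n}$; the lemma then follows from the standard fact that on $\mathbb R^k$ a real-analytic function either vanishes identically or only on a Lebesgue-measure-zero set (proved by induction on $k$, reducing to isolated zeros in one dimension).

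To realize this packaging, observe that $D_1D_2=1$ iff (i) $U_F$ has a simple spectrum, and (ii) the off-diagonal ratios $\mu_j/\mu_i$, $i\neq j$, of the eigenvalues $\mu_1,\ldots,\mu_d$ of $U_F$ (with $d:=d_\textnormal{loc}^N$) are pairwise distinct. Condition (i) fails iff $\textnormal{disc}(\chi_{U_F})=0$, where $\chi_{U_F}$ is the characteristic polynomial. Given (i), condition (ii) fails iff
\begin{equation*}
P:=\prod_{\substack{(i,j)\neq(i',j')\\i\neq j,\,i'\neq j'}}\bigl(\mu_j\mu_{i'}-\mu_i\mu_{j'}\bigr)
\end{equation*}
vanishes; by the manifest invariance of this product under simultaneous relabeling of eigenvalues, $P$ is a symmetric polynomial in $\{\mu_i\}$, and hence by the fundamental theorem of symmetric polynomials is a polynomial in the coefficients of $\chi_{U_F}$. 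Since $U_F=e^{-i(1-t_{n-1})H_n}\cdots e^{-it_1H_1}$ is entire in $(\mathbf h,\mathbf J)$ (each $H_j$ is linear in the parameters, and the matrix exponential is an entire matrix function), the coefficients of $\chi_{U_F}$, and therefore $\textnormal{disc}(\chi_{U_F})\cdot P$, are entire in $(\mathbf h,\mathbf J)$. This is the desired real-analytic function.

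It remains to exhibit one parameter point at which both factors are nonzero. Here the hypothesis $\alpha_1^x=\alpha_1^z=\gamma_1^{zz}=1$ is used. Zero out every coefficient in pieces $j\geq 2$ so that $H_j\equiv 0$ and $U_j=I$; then $U_F=e^{-it_1H_1}$. In piece 1, zero out every coefficient indexed by an $\alpha_1^u$ or $\gamma_1^{u,v}$ other than the three required ones, leaving $H_1$ as a mixed-field Ising Hamiltonian in parameters $\{h_{1,l}^x,h_{1,l}^z,J_{1,l}^{zz}\}$. By \cite{Hua21PP}, for almost all such parameters $H_1$ has non-degenerate spectrum with non-degenerate gaps. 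Multiply these parameters by a sufficiently small positive constant to enforce $\|H_1\|<\pi/(2t_1)$: the eigenvalues $e_j$ of $H_1$ then lie in an interval of length $<\pi/t_1$ and the gaps $e_k-e_j$ in an interval of length $<2\pi/t_1$, so the maps $e\mapsto e^{-iet_1}$ and $\delta\mapsto e^{-i\delta t_1}$ are injective on the respective ranges (modulo $2\pi$ for the gaps). Consequently the non-degeneracy of the spectrum and gaps of $H_1$ lifts to pairwise distinct eigenvalues and pairwise distinct ratios $\lambda_k/\lambda_j=e^{-i(e_k-e_j)t_1}$ of $U_F$, yielding $D_1=D_2=1$ at this parameter point.

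The main technical subtlety is the symmetric packaging of (ii): the ordered index pairs are not intrinsically labeled, so one must verify that the natural ``detector'' $P$ really is invariant under every permutation of the eigenvalues (which it is, by the manifest symmetry of its defining product) and therefore admits expression as a polynomial in the coefficients of $\chi_{U_F}$—a globally well-defined real-analytic object on the parameter space, not merely a locally-defined function on eigenvalue charts. Once this packaging is in place, the entire dependence of $U_F$ on $(\mathbf h,\mathbf J)$, the explicit good parameter point above, and the real-analytic zero-set principle assemble routinely to give the lemma.
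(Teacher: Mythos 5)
Your proposal is correct and follows essentially the same route as the paper: package the degeneracy condition as a symmetric polynomial in the eigenvalues of $U_F$ (hence, via the fundamental theorem of symmetric polynomials, a real-analytic function of $(\mathbf h,\mathbf J)$), invoke the measure-zero property of zero sets of non-trivially-vanishing real-analytic functions, and exhibit a witness point by turning off all pieces except the mixed-field Ising terms of piece 1 and scaling them small enough that the non-degenerate spectrum and gaps of \cite{Hua21PP} lift injectively to the quasienergies. The only cosmetic differences are that you factor the detector as a discriminant times a ratio-type product and express it through the coefficients of the characteristic polynomial, whereas the paper uses a single product and the power sums $\tr(U_F^k)$.
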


\begin{proof}
Recall that $\lambda_j$ is an eigenvalue with multiplicity $d_j$ of the Floquet operator
\begin{equation}
U_F=e^{-iH_n(t_n-t_{n-1})}e^{-iH_{n-1}(t_{n-1}-t_{n-2})}\cdots e^{-iH_1(t_1-t_0)}.
\end{equation}
Let
\begin{equation}
\mu_j:=\lambda_{j'}\quad\textnormal{for}\quad\sum_{k=1}^{j'-1}d_k<j\le\sum_{k=1}^{j'}d_k
\end{equation}
so that $\mu_1,\mu_2,\ldots,\mu_{2^N}$ are a complete set of eigenvalues of $U_F$. Since $U_F$ is unitary, one can write
\begin{equation}
\mu_j=e^{-iE_j},\quad E_j\in[-\pi,\pi),\quad\forall j,
\end{equation}
where $E_j$ is called a quasienergy of the Floquet system. The condition $D_1D_2=1$ is equivalent to the condition that the differences $\{(E_j-E_k)\bmod 2\pi\}_{j\neq k}$ between quasienergies with different indices are all distinct, i.e., for any $j\neq k$,
\begin{equation}
E_j-E_k\equiv E_{j'}-E_{k'}\pmod{2\pi}\quad\implies\quad(j=j')~\textnormal{and}~(k=k').
\end{equation}
This is reminiscent of the non-degenerate gap condition (\ref{eq:ndg}) for the spectrum of a time-independent Hamiltonian.

The real and imaginary parts of every element of the complex matrix $e^{-iH_j(t_j-t_{j-1})}$ are real analytic functions of $(\mathbf h,\mathbf J)$. So are the real and imaginary parts of the elements of $U_F$. Let
\begin{equation}
F:=\prod_{((j\neq j')\textnormal{ or }(k\neq k'))\textnormal{ and }((j\neq k')\textnormal{ or }(k\neq j'))}(\mu_j\mu_k-\mu_{j'}\mu_{k'})
\end{equation}
so that $F=0$ if and only if $D_1D_2=1$. By definition, $F$ is a symmetric polynomial in $\mu_1,\mu_2,\ldots,\mu_{2^N}$. The fundamental theorem of symmetric polynomials implies that $F$ can be expressed as a polynomial in $\Lambda_1,\Lambda_2,\ldots$, where
\begin{equation}
\Lambda_k:=\sum_{j=1}^{2^N}\mu_j^k=\tr(U_F^k).
\end{equation}
We see that every $\Lambda_k$ and hence $F$ are polynomials in the elements of $U_F$. Therefore, the real and imaginary parts of $F$ are real analytic functions of $(\mathbf h,\mathbf J)$. Since the zero set of a real analytic function has measure zero unless the function is identically zero \cite{Mit20}, it suffices to find a particular $(\mathbf h,\mathbf J)\in\mathbb R^{12nN-9n}$ such that $D_1D_2=1$. To construct such an example, let
\begin{equation}
h_{1,l}^x:=\epsilon g_l^x,\quad h_{1,l}^z:=\epsilon g_l^z,\quad J_{1,l}^{zz}:=\epsilon K_l,\quad\forall l,
\end{equation}
where $g_l^x,g_l^z,K_l$ are given in Lemma \ref{l} below. Let all other elements of $(\mathbf h,\mathbf J)$ be $0$. For sufficiently small $\epsilon>0$, $|E_j|<\pi/2$ for all $j$ so that
\begin{equation}
E_j-E_k\equiv E_{j'}-E_{k'}\pmod{2\pi}\quad\iff\quad E_j-E_k=E_{j'}-E_{k'}.
\end{equation}
Thus, the condition $D_1D_2=1$ follows from Lemma \ref{l}.
\end{proof}

\begin{lemma} [\cite{Hua21PP}] \label{l}
There exists $(g_l^u|_{1\le l\le N}^{u\in\{x,z\}},K_l|_{1\le l\le N-1})\in\mathbb R^{3N-1}$ such that the spectrum of the Hamiltonian
\begin{equation} \label{eq:ndgH}
\sum_{l=1}^N(g_l^x\sigma_l^x+g_l^z\sigma_l^z)+\sum_{l=1}^{N-1}K_l\sigma_l^z\sigma_{l+1}^z
\end{equation}
is non-degenerate and has non-degenerate gaps (Definition \ref{def:ndg}).
\end{lemma}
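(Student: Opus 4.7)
I would exhibit an explicit $(\mathbf{g}^x,\mathbf{g}^z,\mathbf{K})\in\mathbb{R}^{3N-1}$ satisfying the two spectral properties. The first temptation is the classical slice $g_l^x\equiv 0$, where (\ref{eq:ndgH}) becomes diagonal in the computational basis with eigenvalues $E_s=\sum_l s_l g_l^z+\sum_l s_l s_{l+1}K_l$ for $s\in\{\pm 1\}^N$, and each spectral condition becomes linear in $(\mathbf{g}^z,\mathbf{K})\in\mathbb{R}^{2N-1}$; one would then try to choose $(\mathbf{g}^z,\mathbf{K})$ with $\mathbb{Q}$-linearly independent entries.

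This attempt already fails for $N\ge 3$, since for instance $E_{(1,1,1)}-E_{(1,1,-1)}=2g_3^z+2K_2=E_{(-1,1,1)}-E_{(-1,1,-1)}$ \emph{identically} in $(\mathbf{g}^z,\mathbf{K})$. Structurally, whenever $s_l=s'_l$ holds at three consecutive positions (with the natural boundary conventions), simultaneously flipping $s_l$ and $s'_l$ at the middle position preserves the gap $E_s-E_{s'}$, because both the $g^z$-coefficient at that position and the two adjacent $K$-coefficients remain zero. Hence no classical parameter vector works for $N\ge 3$, and a nonzero transverse field is essential.

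The plan is then to perturb around the classical limit. Fix $(\mathbf{g}^z,\mathbf{K})$ with $\mathbb{Q}$-linearly independent entries (so that all classically non-identical gap polynomials take distinct numerical values), set $\mathbf{g}^x=\epsilon\mathbf{v}$ with small $\epsilon>0$ and $\mathbf{v}\in\mathbb{R}^N$ to be chosen, and track how the remaining gap degeneracies split. Because $\sigma_l^x$ is off-diagonal in the classical basis the first-order eigenvalue shifts vanish; at second order the shift of $E_s$ is $\epsilon^2\sum_l v_l^2/(E_s-E_{s\oplus e_l})$, where $s\oplus e_l$ flips the $l$-th bit of $s$. The main obstacle is to verify, for some explicit $\mathbf{v}$, that these $O(\epsilon^2)$ corrections simultaneously lift \emph{every} classically coincident gap and keep all gaps pairwise distinct; this is itself a non-vanishing (genericity) condition for an analytic function on $\mathbb{R}^N\ni\mathbf{v}$. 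It is satisfied because $\sum_l v_l\sigma_l^x$ couples each classical eigenstate to all $N$ of its single-bit-flipped neighbors, including flips at the ``interior'' positions on which the classical gap is blind, so picking $\mathbf{v}$ with $\mathbb{Q}$-linearly independent entries delivers the required parameter vector in $\mathbb{R}^{3N-1}$.
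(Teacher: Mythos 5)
The paper does not actually prove Lemma~\ref{l}: it is imported wholesale from Ref.~\cite{Hua21PP}, so there is no in-paper argument to compare yours against. Judged on its own terms, your diagnosis of the classical slice is correct (for $N\ge3$ the gap polynomials coincide identically, so a transverse field is essential), but the decisive step of your plan --- that the second-order corrections in $\epsilon$ lift \emph{every} classically coincident gap --- is false. The second-order corrected energy $E_s+\epsilon^2\sum_l v_l^2/\Delta_l(s)$, with $\Delta_l(s):=E_s-E_{s\oplus e_l}=2s_l\bigl(g_l^z+s_{l-1}K_{l-1}+s_{l+1}K_l\bigr)$, is still a sum of terms each depending only on the window $(s_{l-1},s_l,s_{l+1})$; it therefore inherits exactly the degeneracy mechanism you identified in the classical case, namely that flips in well-separated regions contribute independently. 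Concretely, for $N=5$ take $s=(+,+,+,+,+)$, $s'=(-,+,+,+,+)$, $s''=(+,+,+,+,-)$, $s'''=(-,+,+,+,-)$. Then $E_s-E_{s'}=E_{s''}-E_{s'''}=2g_1^z+2K_1$ identically, and for every $l$ the bracket $\Delta_l(s)^{-1}-\Delta_l(s')^{-1}-\Delta_l(s'')^{-1}+\Delta_l(s''')^{-1}$ vanishes identically: the window $\{l-1,l,l+1\}$ cannot contain both site $1$ and site $5$, and whichever it misses forces a pairwise cancellation ($\Delta_l(s)=\Delta_l(s'')$, $\Delta_l(s')=\Delta_l(s''')$ if it misses site $5$; $\Delta_l(s)=\Delta_l(s')$, $\Delta_l(s'')=\Delta_l(s''')$ if it misses site $1$). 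No choice of $\mathbf v$ lifts this coincidence at order $\epsilon^2$; one must go to at least fourth order (clusters touching both flip regions) and then control \emph{all} such quadruples, which is a substantially harder analysis than the one you sketch and is the actual content of the lemma.

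Two smaller points. First, your closing criterion is off even where it could apply: since the bracket coefficients are irrational functions of $(\mathbf g^z,\mathbf K)$, $\mathbb Q$-linear independence of the $v_l$ is neither necessary nor sufficient; the correct statement is that each \emph{nonzero} linear form in $(v_1^2,\ldots,v_N^2)$ vanishes on a measure-zero set, and one picks $\mathbf v$ outside the finite union. Second, a complete argument must also classify which quadruples of classical configurations produce identical gap polynomials in the first place (they are not only the single-site examples you mention), and show that the chosen order of perturbation theory splits each class without creating new coincidences. As it stands, the proposal establishes the necessity of $\mathbf g^x\neq0$ but not the existence claim.
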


\begin{corollary} \label{c:b}
For the Floquet system (\ref{eq:modelb}), the set
\begin{equation}
\{(\mathbf h,\mathbf J)\in\mathbb R^{3N-1}:D_1D_2>1\}
\end{equation}
has Lebesgue measure zero.
\end{corollary}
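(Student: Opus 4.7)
The plan is to reuse the proof template of Lemma \ref{l:ndg}. The model (\ref{eq:modelb}) corresponds to $n = 2$, $T = (0, 1/2, 1)$, $\alpha_1^z = \gamma_1^{zz} = \alpha_2^x = 1$ in the notation of (\ref{eq:f}), with all other entries of $\alpha, \gamma$ zero. The symmetric polynomial $F$ of that proof is a real analytic function of $(\mathbf h, \mathbf J) \in \mathbb R^{3N-1}$ whose zero set coincides with $\{D_1 D_2 = 1\}$, so (since a nontrivial real analytic function has a measure-zero zero set) it suffices to exhibit one parameter point at which $D_1 D_2 = 1$.

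To construct such a witness, I would borrow the ingredients of Lemma \ref{l}: let $(g_l^x, g_l^z, K_l)$ be coefficients for which the time-independent Hamiltonian $H := \sum_l (g_l^x \sigma_l^x + g_l^z \sigma_l^z) + \sum_l K_l \sigma_l^z \sigma_{l+1}^z$ has non-degenerate spectrum with non-degenerate gaps, and set $h_l^z = \epsilon g_l^z$, $J_l = \epsilon K_l$, $h_l^x = \epsilon g_l^x$ for a small parameter $\epsilon > 0$. Then
\begin{equation*}
U_F(\epsilon) = e^{-i(\epsilon/2) \sum_l g_l^x \sigma_l^x}\, e^{-i(\epsilon/2)\left(\sum_l g_l^z \sigma_l^z + \sum_l K_l \sigma_l^z \sigma_{l+1}^z\right)} = I - i(\epsilon/2) H + O(\epsilon^2).
\end{equation*}
The difference from the construction inside Lemma \ref{l:ndg} is that here the $x$-field and the $zz$-interaction live in different pieces of the period, so $U_F(\epsilon)$ is only approximately $e^{-i(\epsilon/2) H}$.

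The main obstacle is to verify that the non-degenerate gap property of $H$ survives this BCH mismatch and transfers to the quasienergies of $U_F(\epsilon)$. I would argue by analytic perturbation theory: since $H$ is non-degenerate, the eigenvalues $\lambda_j(\epsilon)$ of $U_F(\epsilon)$ are analytic branches with $\lambda_j(0) = 1$ and $\lambda_j'(0) = -i E_j / 2$, so the quasienergies admit expansions $E_j^F(\epsilon) = (\epsilon/2) E_j + O(\epsilon^2)$, all lying in $(-\pi/2, \pi/2)$ for $\epsilon$ small so that the $\bmod 2\pi$ equivalence reduces to plain equality. The differences of gaps then read
\begin{equation*}
(E_j^F - E_k^F) - (E_{j'}^F - E_{k'}^F) = (\epsilon/2)\bigl[(E_j - E_k) - (E_{j'} - E_{k'})\bigr] + O(\epsilon^2),
\end{equation*}
whose leading coefficient is nonzero and (since there are only finitely many choices of indices) bounded away from zero whenever $(j,k) \neq (j', k')$ with $j \neq k$ and $j' \neq k'$, by Lemma \ref{l}. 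Choosing $\epsilon$ small enough that the $O(\epsilon^2)$ correction cannot close any such difference yields $D_1 D_2 = 1$ at this parameter point, completing the proof.
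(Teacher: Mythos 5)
Your proposal is correct and matches the paper's proof in essentially every respect: the same reduction to exhibiting a single witness point with $D_1D_2=1$ (via real-analyticity of $F$ in $(\mathbf h,\mathbf J)$), the same witness $h_l^z=\epsilon g_l^z$, $J_l=\epsilon K_l$, $h_l^x=\epsilon g_l^x$ built from Lemma \ref{l}, and the same leading-order identification of the quasienergies as $\epsilon e_j/2+O(\epsilon^2)$ so that the non-degenerate-gap property survives for small $\epsilon$. The only (immaterial) difference is how the $O(\epsilon^2)$ error is controlled: you invoke analytic perturbation theory for the eigenvalue branches of $U_F(\epsilon)$ through the degenerate point $U_F(0)=I$, whereas the paper bounds $\|e^{-i(H'_1+H'_2)/2}-e^{-iH'_2/2}e^{-iH'_1/2}\|=O(\epsilon^2)$ and applies a spectral perturbation theorem for unitaries \cite{BD84}; both routes close the argument.
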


\begin{proof}
As in the proof of Lemma \ref{l:ndg}, it suffices to find a particular $(\mathbf h,\mathbf J)\in\mathbb R^{3N-1}$ such that $D_1D_2=1$. To this end, let
\begin{equation}
h_l^z:=\epsilon g_l^z,\quad J_l:=\epsilon K_l,\quad h_l^x:=\epsilon g_l^x,\quad\forall l,
\end{equation}
where $g_l^x,g_l^z,K_l$ are given in Lemma \ref{l} above. Let $e_1<e_2<\cdots<e_{2^N}$ be the eigenvalues of (\ref{eq:ndgH}) and $E'_1\le E'_2\le\cdots\le E'_{2^N}$ be the quasienergies of the unitary operator $e^{-i(H'_1+H'_2)/2}$. For sufficiently small $\epsilon>0$, $E'_j=\epsilon e_j/2$ for all $j$. The non-degenerate gap condition (\ref{eq:ndg}) implies that
\begin{equation} \label{eq:ndge}
0<\min_{((j\neq j')\textnormal{ or }(k\neq k'))\textnormal{ and }((j\neq k')\textnormal{ or }(k\neq j'))}|E'_j+E'_k-E'_{j'}-E'_{k'}|\propto\epsilon.
\end{equation}
Let $E_1\le E_2\le\cdots\le E_{2^N}$ be the quasienergies of the Floquet operator $e^{-iH'_2/2}e^{-iH'_1/2}$. In the limit $\epsilon\to0^+$,
\begin{equation}
\|e^{-i(H'_1+H'_2)/2}-e^{-iH'_2/2}e^{-iH'_1/2}\|=O(\epsilon^2)
\end{equation}
so that \cite{BD84}
\begin{equation}
|E_j-E'_j|=O(\epsilon^2),\quad\forall j.
\end{equation}
This equation and (\ref{eq:ndge}) imply $D_1D_2=1$.
\end{proof}

\begin{lemma} \label{l:r}
$\psi(t)_S$ is $\sqrt[4]{d_S^2D_2/D_\textnormal{eff}}$-periodic, where $d_S:=d_\textnormal{loc}^L$ is the dimension of the Hilbert space of subsystem $S$.
\end{lemma}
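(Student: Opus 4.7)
The plan is to generalize the computation in the proof of Lemma~\ref{l:eq} from scalar expectation values of a single observable to matrix elements of the reduced density matrix, and then convert the resulting Hilbert--Schmidt bound into a trace-norm bound.

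I would begin by defining $\bar\psi(x)_S:=\lim_{\mathbb{N}\ni M\to\infty}\frac{1}{M}\sum_{m=0}^{M-1}\psi(m+x)_S$ for $x\in[0,1)$; the limit exists elementwise by the same equidistribution argument used in Lemma~\ref{l:eq}, and the result is a density matrix. Fix an orthonormal basis $\{|a\rangle\}_{a=1}^{d_S}$ of the Hilbert space of $S$ and express each matrix element as an expectation value, $(\psi(m+x)_S)_{ab}=\langle\psi(m)|B_{ab}(x)|\psi(m)\rangle$, where $B_{ab}(x):=U(0,x)^\dagger(|b\rangle\langle a|_S\otimes I_{\bar S})U(0,x)$ is an operator of norm~$1$.

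Next I would repeat the Cauchy--Schwarz chain that produced $\Delta A(0)\le D_2/D_\textnormal{eff}$ in the proof of Lemma~\ref{l:eq}, but stop one step earlier so that the diagonal sums $(B_{ab}B_{ab}^\dagger)_{jj}$ and $(B_{ab}^\dagger B_{ab})_{kk}$ are kept intact rather than bounded by $\|B_{ab}\|^2=1$. Summing the resulting per-matrix-element bounds over $a,b$, applying Cauchy--Schwarz once more to the $(a,b)$-sum, and invoking the identity $\sum_{a,b}B_{ab}B_{ab}^\dagger=\sum_{a,b}B_{ab}^\dagger B_{ab}=d_S I$ (a consequence of $\sum_a|a\rangle\langle a|=I_S$ and unitarity of $U(0,x)$) yields, writing $\|\cdot\|_2$ for the Hilbert--Schmidt norm,
\begin{equation*}
\lim_{\mathbb{N}\ni M\to\infty}\tfrac{1}{M}\sum_{m=0}^{M-1}\|\psi(m+x)_S-\bar\psi(x)_S\|_2^2\le d_S D_2/D_\textnormal{eff}\quad\text{for every }x\in[0,1),
\end{equation*}
and integrating over $x$ gives the same bound on $\lim_{\mathbb{N}\ni M\to\infty}\frac{1}{M}\sum_m\int_0^1\|\psi(m+x)_S-\bar\psi(x)_S\|_2^2\,dx$.

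Finally I would convert to a trace-norm bound. Combining $\|\cdot\|_1\le\sqrt{d_S}\,\|\cdot\|_2$ for $d_S\times d_S$ matrices with Cauchy--Schwarz on $[0,1]$ gives $X_m^2\le d_S\int_0^1\|\psi(m+x)_S-\bar\psi(x)_S\|_2^2\,dx$ for $X_m:=\int_0^1\|\psi(m+x)_S-\bar\psi(x)_S\|_1\,dx$, so $\lim_{\mathbb{N}\ni M\to\infty}\frac{1}{M}\sum_m X_m^2\le d_S^2 D_2/D_\textnormal{eff}$; Jensen's inequality then upgrades this to $\lim_{\mathbb{N}\ni M\to\infty}\frac{1}{M}\sum_m X_m\le d_S\sqrt{D_2/D_\textnormal{eff}}$, and Markov's inequality at threshold $\epsilon:=\sqrt[4]{d_S^2 D_2/D_\textnormal{eff}}$ forces the fraction of bad periods $m$ (those with $X_m>\epsilon$) to be at most $\epsilon$, matching Definition~\ref{def:apmf}. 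The main obstacle is coordinating the two Cauchy--Schwarz steps so that the identity $\sum_{a,b}B_{ab}B_{ab}^\dagger=d_S I$ can be used to cap the accumulated factor at $d_S$ rather than the naive $d_S^2$; without this collapse the final $\epsilon$ acquires an extra $\sqrt{d_S}=2^{L/2}$ that would materially tighten the range of $L$ in Theorem~\ref{t:r}.
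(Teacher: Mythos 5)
Your proposal is correct and follows essentially the same route as the paper: expand $\psi(t)_S-\bar\psi(x)_S$ in a complete operator basis of $d_S^2$ elements, feed each coefficient through the single-observable variance bound of Lemma~\ref{l:eq}, convert Hilbert--Schmidt to trace norm at the cost of $\sqrt{d_S}$, and finish with Markov. The only difference is in bookkeeping: the paper uses the Heisenberg--Weyl basis, whose elements have operator norm $1/\sqrt{d_S}$, so the established bound (\ref{eq:lop}) applies to each coefficient as a black box with the factor of $d_S$ already absorbed, whereas your matrix-unit basis forces you to reopen the proof of Lemma~\ref{l:eq} and recover that factor from the completeness relation $\sum_{a,b}B_{ab}B_{ab}^\dagger=d_S I$ --- the two mechanisms are the same, since the Weyl basis also satisfies $\sum_jA_jA_j^\dagger=d_S I$.
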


\begin{proof}
Let
\begin{equation}
\bar\psi(x)_S:=\lim_{\mathbb N\ni M\to\infty}\frac1M\sum_{m=0}^{M-1}\psi(m+x)_S,\quad x\in[0,1).
\end{equation}
This function plays the role of $\bar f$ in Definition \ref{def:apmf}.

The rest of the proof largely follows the calculation in Section 5 of Ref.~\cite{Sho11}. Let $\{|0\rangle_S,|1\rangle_S,\ldots,|d_S-1\rangle_S\}$ be an arbitrary orthonormal basis of the Hilbert space of $S$. Define $d_S^2$ operators
\begin{equation}
A_{d_Sj_1+j_2+1}:=\frac1{\sqrt{d_S}}\sum_{k=0}^{d_S-1}e^{2\pi ij_2k/d_S}|(j_1+k)\bmod d_S\rangle_S\langle k|_S
\end{equation}
for $j_1,j_2=0,1,\ldots,d_S-1$. It is easy to see that
\begin{equation}
\|A_j\|=1/\sqrt{d_S},\quad\tr(A_j^\dag A_k)=\delta_{jk}.
\end{equation}
Since $\{A_j\}_{j=1}^{d_S^2}$ is a complete basis of operators on $S$, we expand
\begin{equation}
\psi(t)_S-\bar\psi(t-\lfloor t\rfloor)_S=\sum_{j=1}^{d_S^2}a_j(t)A_j,\quad a_j(t)=\tr\left(A_j^\dag\big(\psi(t)_S-\bar\psi(t-\lfloor t\rfloor)_S\big)\right),
\end{equation}
where $\lfloor\cdot\rfloor$ is the floor function. (\ref{eq:lop}) implies that
\begin{equation}
\lim_{M\to\infty}\frac1M\int_0^M|a_j(t)|^2\,\mathrm dt\le\frac{\|A_j^\dag\|^2D_2}{D_\textnormal{eff}}=\frac{D_2}{d_SD_\textnormal{eff}}.
\end{equation}
Using the relationship between the trace and Frobenius norms,
\begin{equation}
\|\psi(t)_S-\bar\psi(t-\lfloor t\rfloor)_S\|_1\le\sqrt{d_S\sum_{j,k=1}^{d_S^2}a_j^*(t)a_k(t)\tr(A_j^\dag A_k)}=\sqrt{d_S\sum_{j=1}^{d_S^2}|a_j(t)|^2}
\end{equation}
so that
\begin{equation}
\lim_{M\to\infty}\frac1M\int_0^M\|\psi(t)_S-\bar\psi(t-\lfloor t\rfloor)_S\|_1\,\mathrm dt\le\sqrt{\sum_{j=1}^{d_S^2}\lim_{M\to\infty}\frac{d_S}M\int_0^M|a_j(t)|^2\,\mathrm dt}\le d_S\sqrt{\frac{D_2}{D_\textnormal{eff}}}.
\end{equation}
Finally, Markov's inequality implies that $\psi(t)_S$ is $\sqrt[4]{d_S^2D_2/D_\textnormal{eff}}$-periodic.
\end{proof}

Theorem \ref{t:b} follows from Lemma \ref{l:eq} and Corollaries \ref{c:t2}, \ref{c:b}. Theorem \ref{t:f} follows from Lemmas \ref{l:eq}, \ref{l:ndg} and Corollary \ref{c:t2}. Theorem \ref{t:r} follows from Lemma \ref{l:r} because
\begin{equation}
\sqrt[4]{d_S^2D_2/D_\textnormal{eff}}=e^{-\Omega(N)}\impliedby
\begin{cases}
d_S=2^L\le2^{0.29248N}&\text{condition in Theorem \ref{t:r}}\\
D_2=1&\text{Lemma \ref{l:ndg}}\\
D_\textnormal{eff}\ge(3/2-10^{-6})^N&\text{Corollary \ref{c:t2}}
\end{cases}.
\end{equation}

\printbibliography

\end{document}